\documentclass[abstract=true, DIV=14]{scrartcl}

\usepackage[noEnd,commentColor=black]{algpseudocodex}
\usepackage{amsmath}
\usepackage{amssymb}
\usepackage{amsthm}
\usepackage{mathtools}
\usepackage[mathscr]{euscript}
\usepackage[shortlabels]{enumitem}
\usepackage{graphicx} 
\usepackage{subcaption}
\usepackage{thmtools}
\usepackage{todonotes}
\usepackage{xcolor}
\usepackage{microtype}

\usepackage{thm-restate}

\usepackage[T1]{fontenc}
\usepackage{lmodern}
\usepackage[utf8]{inputenc}

\usepackage[
    backend=biber,
    style=alphabetic,
    sorting=nyt,
    maxnames=99,
    maxcitenames=99,
    maxalphanames=4,
    minalphanames=3,
    url=false
]{biblatex}

\DeclareFieldFormat[article,inbook,incollection,inproceedings,patent,thesis,unpublished,misc]{title}{#1}
\DeclareFieldFormat[article,inbook,incollection,inproceedings]{pages}{pages #1}

\usepackage[normalem]{ulem}

\usepackage{tcolorbox}

\usepackage{upgreek}
\usepackage[colorlinks]{hyperref}
\usepackage[capitalize]{cleveref}

\crefname{equation}{eq.}{eqs.} 
\crefname{enumi}{}{} 
\crefname{icase}{case}{cases}
\crefname{ipart}{part}{parts}
\crefname{iprop}{property}{properties}
\crefname{iinv}{invariant}{invariants}

\crefformat{section}{\S\,#2#1#3}
\crefformat{subsection}{\S\,#2#1#3}
\crefrangeformat{section}{\S\S\,#3#1#4--#5#2#6}
\crefmultiformat{section}{\S\S\,#2#1#3}{,~#2#1#3}{,~#2#1#3}{,~#2#1#3}

\usepackage{authblk}
\usepackage[normalem]{ulem}

\usepackage{tikz}
\usetikzlibrary{calc}

\newcommand{\R}{\mathbb{R}}
\newcommand{\fO}{\mathcal{O}}

\newcommand{\cR}{\mathcal{R}}

\newcommand{\eat}[1]{}

\renewcommand{\alpha}{\upalpha}

\newcommand{\cP}{\mathscr{P}}
\newcommand{\cH}{\mathscr{H}}

\DeclareMathOperator{\prf}{\mathsf{pref}}

\renewcommand{\cR}{\mathscr{R}}
\newcommand{\cQ}{\mathscr{Q}}
\newcommand{\cC}{\mathscr{C}}

\newcommand{\cS}{\mathcal{S}}

\newcommand{\connected}[1]{\def\temp{#1}\ifx\temp\empty\sim\else\overset{#1}{\sim}\fi}

\tikzset{
	point/.style={circle, fill, inner sep=1.5pt},
	smallpoint/.style={point, inner sep=1.2pt},
	tinypoint/.style={point, inner sep=1pt},
	hlbox/.style={fill, {white!90!black}},
	subrect/.style={draw, fill={white!80!cyan}}, 
	msrect/.style=subrect 
}

\newtheorem{theorem}{Theorem}[section]
\newtheorem{lemma}[theorem]{Lemma}

\theoremstyle{definition}

\title{Fast and simple multiplication of bounded twin-width matrices}

\author[1]{L\'aszl\'o Kozma}
\author[2]{Michal Opler\thanks{This work was co-funded by the European Union under the project Robotics and advanced industrial production (reg. no. CZ.02.01.01/00/22\_008/0004590).}}
\affil[1]{Faculty of Computer Science, TU Dresden, Germany}
\affil[2]{Czech Technical University in Prague, Czech Republic}

\date{}

\begin{document}


\maketitle

\begin{abstract}
Matrix multiplication is a fundamental task in almost all computational fields, including machine learning and optimization, computer graphics, signal processing, and graph algorithms (static and dynamic). 
Twin-width is a natural complexity measure of matrices (and more general structures) that has recently emerged as a unifying concept with important algorithmic applications. While the twin-width of a matrix is invariant to re-ordering rows and columns, most of its algorithmic applications to date assume that the input is given in a certain canonical ordering that yields a bounded twin-width contraction sequence. In general, efficiently finding such a sequence -- even for an approximate twin-width value -- remains a central and elusive open question. 

In this paper we show that a binary $n \times n$ matrix of twin-width $d$   can be preprocessed in $\widetilde{\fO}_d(n^2)$ time, so that its product with any vector can be computed in $\widetilde{\fO}_d(n)$ time. Notably, the twin-width of the input matrix need not be known and no particular ordering of its rows and columns is assumed. If a canonical ordering is available, i.e., if the input matrix is $d$-twin-ordered, then the runtime of preprocessing and matrix-vector products can be further reduced to $\fO(n^2+dn)$ and $\fO(dn)$. 

Consequently, we can multiply two $n \times n$ matrices in $\widetilde{\fO}(n^2)$ time, when \emph{at least one} of the matrices consists of 0/1 entries and has bounded twin-width. The results also extend to the case of bounded twin-width matrices with adversarial corruption. Our algorithms are significantly faster and simpler than earlier methods that involved first-order model checking and required both input matrices to be $d$-twin-ordered.

\end{abstract}

\section{Introduction}\label{sec1}

 Matrix multiplication is a fundamental operation in all of computer science, and efficient algorithms for matrix multiplication have been intensively researched since the beginnings of the field. Assuming unit cost scalar operations, the na\"{i}ve way to multiply two $n \times n$ matrices takes $\fO(n^3)$ time. This was first improved by Strassen~\cite{Strassen} to $\fO(n^{2.81})$, following on the heels of Karatsuba's fast integer multiplication algorithm~\cite{karatsuba1962multiplication}. Since then, the complexity of matrix multiplication has been shown to be in $\fO(n^\omega)$, with $\omega < 2.37156$~\cite{Williams}, and further improving this bound remains one of the central questions of computer science; see~\cite{Blaser13, burgisser2013algebraic} for broad surveys.

Applications of matrix multiplication arise in virtually all computational fields. A closely related task is to preprocess an $n \times n$ matrix $M$, such as to allow efficiently computing matrix-vector products $Mv$ for arbitrary size-$n$ vectors $v$ presented online. An algorithm for the latter task that takes time $\mathscr{T}$ for preprocessing $M$ and time $\mathscr{P}$ for computing $Mv$ clearly also implies an algorithm for multiplying two $n \times n$ matrices in time $\mathscr{T}+\mathscr{P}\cdot n$. Matrix-vector products (often in a setting where the same matrix is repeatedly used and can thus be preprocessed) are a core subroutine and efficiency-bottleneck in many areas, including numerical optimization, e.g., in the conjugate gradient method~\cite{golub2013matrix}, in machine learning, both at the learning and at the inference stage~\cite{sze2017efficient, bengio2017deep, shalev2014understanding}, finite element analysis~\cite{zienkiewicz2005finite}, computer graphics~\cite{shirley2009fundamentals}, and graph algorithms~\cite{vishnoi2010laplacian, spielman2019spectral}, to mention just a few. For fully general input, almost-quadratic lower bounds are known for the complexity of the problem in various models of computation~\cite{CliffordGL15, FrandsenHM01, ChakrabortyJ0S24, HenzingerLS22}. The OMv conjecture~\cite{henzinger2015unifying} implies that online matrix-vector products cannot be computed in $\fO(n^{2-\varepsilon})$ time (for any $\varepsilon>0$), even if a polynomial-time preprocessing of the matrix is allowed, and even in the Boolean multiplication case. This conjecture has been the basis of a multitude of conditional hardness results for dynamic graph-, string-, and other algorithmic problems, e.g., see~\cite{hanauer2022recent, jin2022tight, jiang2023complexity, van2019dynamic, clifford2018, HenzingerLS22}.

Given the prevalence of the online matrix-vector product and its worst-case computational hardness, it is natural to aim for faster methods on inputs with some intrinsic structure, beyond the notions of \emph{sparsity} or \emph{low rank} that are classically exploited, e.g., see~\cite{golub2013matrix}. In this direction, in very recent work, Anand, van den Brand, and McCarty~\cite{anand2025structural} consider input matrices of low Vapnik-Chervonenkis (VC) dimension. For $n \times n$ matrices of VC-dimension $d$ they show that matrix-vector products can be computed in time $\widetilde{\fO}(n^{2-1/d})$ after $\widetilde{\fO}(n^{2})$ preprocessing.\footnote{The notation $\widetilde{\fO}(\cdot)$ suppresses a factor $\log^c{n}$, for some $c>0$.} VC-dimension plays an important role in learning theory~\cite{shalev2014understanding} and in computational geometry~\cite{matousek2013lectures, har2011geometric}, but imposes a rather mild restriction on the structure of the input. Accordingly, the above runtime for matrix-vector multiplication is -- already for moderate values of the VC-dimension -- close to quadratic. 

\medskip

In this paper, we study the complexity of matrix-vector multiplication under a stronger structural restriction: \emph{bounded twin-width}\footnote{One way to make the comparison quantitative is that the growth rate of binary $n \times n$ matrices with bounded VC dimension is $2^{\fO(n^{2-\varepsilon})}$~\cite{alon2011structure} whereas the growth rates of bounded twin-width and bounded twin-order matrices are respectively $2^{\fO(n\log{n})}$ and $2^{\fO(n)}$~\cite{bonnet2024twin4}.}. In this regime we obtain close to linear runtimes for matrix-vector-multiplication, after close to quadratic-time preprocessing (i.e., both runtimes are essentially best possible).

Twin-width is a recently introduced complexity measure of matrices (as well as graphs, permutations, and more general structures) that has emerged as a central and unifying concept~\cite{bonnet2021twin}. Bounded twin-width generalizes bounded tree-width, clique-width, and rank-width, and in graphs it subsumes the minor-free property. Matrices of bounded twin-width do not have to be, in general, either sparse, or low-rank. In \emph{ordered graphs}, bounded twin-width, in some sense, captures the boundary of tractability for first-order model checking~\cite{bonnet2024twin4, thomasse2022brief}. 

Informally, bounded twin-width matrices admit a sequence of merges (``contractions'') of pairs of rows or columns, while preserving a certain local homogeneity, i.e., ensuring that merged rows and columns do not significantly differ. (We give precise definitions later in the section.)

Several algorithmic applications of twin-width were found in the past few years, where otherwise hard problems become polynomial-time solvable or fixed-parameter tractable on bounded twin-width input~\cite{twin_kernel, kratsch2022triangle, matrix_ds, bonnet2024twin3}. These applications, however, mostly assume that the matrix (or graph) is ordered, i.e., its rows and columns are given in a canonical order that yields a contraction sequence in the above sense, or that a tree of contractions is explicitly given. 

In the unordered case, the general algorithmic usefulness of twin-width has so far been less evident. The existence of a polynomial-time- or FPT- algorithm that approximates twin-width (even to a factor that can depend on the twin-width value itself) remains an elusive open question~\cite{bonnet2021twin}. In many important special cases (e.g., minor-free graphs), an ordering can be efficiently found, from which a bounded twin-width contraction sequence can be reconstructed. Even in these cases, going from an intermediate ordering to a contraction sequence is highly non-trivial and involves a super-exponential blow-up of the twin-width parameter, limiting the usefulness of the technique in the context of matrix multiplication. Recognition of matrices of twin-width $4$ is already NP-hard, as is approximating twin-width with a factor better than $5/4$~\cite{twin_npc}.  

 Our main result sidesteps these complexity issues by exploiting an alternative structural consequence of bounded twin-width that we prove. Given this characterization, our algorithms benefit from low twin-width without explicitly requiring a contraction scheme that would certify it. 

\medskip

Matrix multiplication with bounded twin-width inputs has been studied in~\cite{bonnet2023twin5}. They give an algorithm that multiplies two $n \times n$ matrices of twin-width at most $d$ in time $\widetilde{\fO}_d(n^2)$. 

We improve upon this result in multiple aspects: (i) we only require one of the matrices to have bounded twin-width, while in~\cite{bonnet2023twin5} both matrices obey this restriction, (ii) our methods are significantly simpler, in contrast to the general, first-order logic based technique developed in~\cite{bonnet2023twin5}, with small constants and mild dependence on $d$, bringing the approach into the realm of practicality, and most importantly: (iii) our algorithms do not require the matrices to be in a canonical order or given the contraction sequence; the matrix rows and columns can be ordered arbitrarily, and the bound on the twin-width is only used in the analysis. If the matrix is, in fact, given in an ordering that certifies bounded twin-width, then our algorithms are faster and simpler than earlier ones.

Compared to our approach, the results from~\cite{bonnet2023twin5} also have some additional features: (i) they can handle matrices over a finite field $\mathbb{F}_q$ -- for a suitable extension of twin-width, and with a dependence on $q$ in the runtime -- whereas we more strongly restrict (one of) our input matrices to binary entries, and (ii) in~\cite{bonnet2023twin5} it is shown that if the input is provided in the special compact data structure for bounded twin-width matrices from~\cite{matrix_ds}, then the output can be obtained in a similar form in \emph{linear} time (the entries of the resulting matrix can then be obtained by querying the data structure).
While we find it plausible that our framework can also be augmented with these features, we do not pursue this in the current work.

We now proceed with the main definitions and a more precise statement of our results. 

\paragraph{Twin-width and twin-order.}

Consider a binary matrix $M$. Following the presentation from~\cite{matrix_ds}, a \emph{merge sequence} for $M$ is a sequence of {merges} between neighboring pairs of rows or columns, until a single entry remains. A \emph{merge} of two columns replaces them with a single column where the $i$-th entry of the new column is $0$ or $1$, if both original columns had a value of $0$ (resp.\ $1$) at position $i$. Otherwise, the $i$-th entry of the new column is the ``error'' symbol $\bot$; this can happen if the $i$-th entries of the two merged columns differ or if at least one of them is the symbol $\bot$. Merges of rows are analogously defined.

Matrix $M$ is \emph{$d$-twin-ordered}, if there is a merge sequence for $M$ in which every row and column of every intermediate matrix has at most $d$ symbols $\bot$. 
The \emph{twin-width} of $M$ is the smallest value $d$ for which the rows and columns can be re-ordered so that the resulting matrix is $d$-twin-ordered. 
Note that this re-ordering implies a significant shift in complexity. For instance, an $n \times n$ matrix with a chessboard-like arrangement of $0$s and $1$s is not $d$-twin-ordered for any $d<n$, but its twin-width is two (it can be re-ordered into a $2$-twin-ordered matrix). 

\paragraph{Our results.}

Our first result concerns the easier case of $d$-twin-ordered matrices. We give a very simple and efficient multiplication algorithm for this case, by combining: (i) a known structural characterization of $d$-twin-ordered matrices, (ii) an efficient multiplication algorithm based on rectangle-decompositions, and (iii) a simple geometric greedy algorithm that links (i) and (ii). 

A $1$-rectangle in a binary matrix is a contiguous submatrix (the intersection of a set of consecutive rows and a set of consecutive columns) whose entries are all $1$. A rectangle-decomposition of $M$ is a collection of disjoint $1$-rectangles that together cover all $1$-entries of $M$.

The following structural result is well-known~\cite{bonnet2024twin3}, see~\cite{matrix_ds} for a self-contained proof in the context of matrices.

\begin{lemma}[\cite{bonnet2024twin3}]\label{lem1}
Let $M \in \{0,1\}^{n \times n}$ be a $d$-twin-ordered matrix. Then $M$ admits a rectangle-decomposition $\cR$ with $|\cR| \leq d(2n-2)+1 \in \fO(dn)$.
\end{lemma}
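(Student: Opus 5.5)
We prove the statement by running the given merge sequence in reverse: as an ``uncontraction'' process, it starts from the $1\times 1$ matrix and refines a partition of $M$ into contiguous blocks. At every stage the current merged matrix $M'$ with $r$ row-parts and $c$ column-parts corresponds to a partition of $M$ into $r \cdot c$ \emph{zones}. Each zone is the product of an interval of rows and an interval of columns, because merges only involve neighboring rows or columns, so every part is an interval. A zone whose entry in $M'$ is $0$ or $1$ is \emph{constant*} (all $0$ or all $1$ in $M$). A zone whose entry is $\bot$ is \emph{mixed}. The plan is to maintain, throughout the reversed process, a collection of disjoint $1$-rectangles that exactly covers the union of all constant-$1$ zones. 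We then charge the growth of this collection to the $\bot$ entries.

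First, the base case. The single final entry is either $0$, $1$, or $\bot$. If it is constant, the rectangle-decomposition has at most one rectangle, the whole matrix if the entry is $1$ and none otherwise; this accounts for the ``$+1$''. Otherwise the collection starts empty. Next comes the step. Consider undoing one merge, say of two adjacent columns $C_1,C_2$ into $C$. Zones outside column-part $C$ are unchanged. For each row-part $R$:
\begin{itemize}
\item if $(R,C)$ was constant, both $(R,C_1)$ and $(R,C_2)$ are constant with the same value, and we keep the rectangle covering $(R,C)$ (or split nothing);
\item if $(R,C)$ was mixed, the new zones $(R,C_1)$ and $(R,C_2)$ may become constant-$1$, and then we add them as new rectangles. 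This is at most $2$ new rectangles per $\bot$ entry in column $C$ of the merged matrix.
\end{itemize}
Since each column or row of any intermediate matrix has at most $d$ symbols $\bot$, each unmerge adds at most $2d$ rectangles. There are $2n-2$ merges in total, giving at most $2d(2n-2)+1$ rectangles.

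To obtain the sharper bound $d(2n-2)+1$, I would refine the charging so that each $\bot$ in column $C$ costs only one new rectangle rather than two. One option is to instead maintain the rectangles covering $1$-zones \emph{together with} one rectangle per mixed zone, treated as a placeholder. When a mixed zone $(R,C)$ splits, it produces two zones, so the total count of relevant zones increases by at most one per $\bot$. Concretely, I would keep the invariant that the number of rectangles covering constant-$1$ zones is at most $1 + (\text{number of unmerges so far})\cdot d - (\text{number of current mixed zones})$, or a similar potential. Splitting a mixed zone into two pieces raises the count of nonmixed-or-mixed pieces by exactly one. With the potential $\Phi = \#\text{rectangles} + \#\text{mixed zones}$, each unmerge increases $\Phi$ by at most the number of $\bot$s in the split line, which is at most $d$. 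At the end, when the matrix is fully unmerged, there are no mixed zones and $\Phi$ equals $|\cR| \le 1 + d(2n-2)$. Disjointness and coverage are immediate because the final zones are single entries and the rectangles are exactly the maximal-stage constant-$1$ zones, which partition the $1$-entries.

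The main obstacle is exactly this bookkeeping: ensuring the bound uses $d$ per merge rather than $2d$, and checking that at the initial stage $\Phi\le 1$. The latter holds since there is exactly one zone, either a $1$-rectangle, a $0$ zone, or a mixed zone.
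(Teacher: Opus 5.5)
Your proof is correct and is essentially the argument the paper relies on (the proof from \cite{matrix_ds}, which the paper mirrors in its proof of Theorem~\ref{thm3}): the rectangles are the inclusion-maximal constant-$1$ cells of the laminar family of cells of the merge sequence, each charged to a $\bot$ cell of the row or column being split, which gives at most $d$ per merge over $2n-2$ merges, plus one for the initial cell. The only difference is how you get $d$ rather than $2d$ per merge. Your potential $\Phi$ is valid, but the same bound follows from a one-line observation used in the paper's argument: if $(R,C)$ is $\bot$, then $(R,C_1)$ and $(R,C_2)$ cannot both be constant-$1$, since otherwise the merged entry would be $1$. So each $\bot$ yields at most one new rectangle, and the $2d$ detour is unnecessary.
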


The next ingredient is an efficient matrix multiplication algorithm that takes advantage of a rectangle-decomposition of the above kind. We note that an algorithm with similar (but slightly weaker) guarantees is already implied by a result of Lingas~\cite{lingas2002geometric}. The approach of Lingas is based on a rectangle-\emph{cover}, i.e., a collection of rectangles that are also allowed to overlap. In our setting, requiring the rectangles to be disjoint leads to a significantly simpler algorithm and allows saving a logarithmic factor in the runtimes. In \S\,\ref{sec2} we show the following. 

\begin{restatable}{lemma}{lema}
\label{lem2}
Let $M \in \{0,1\}^{n \times n}$ be a binary matrix with a given rectangle-decomposition $\cR$. For arbitrary column vectors $v \in \mathbb{R}^n$, we can compute the products $M v$ or $v^\intercal M$ in time $\fO(n+|\cR|)$.
\end{restatable}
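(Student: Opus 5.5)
The plan is to use prefix sums over the input vector together with difference arrays over the output vector. Each rectangle then contributes $\fO(1)$ work, and a single prefix-sum pass at the end recovers the product.

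Fix $Mv$; the case $v^\intercal M$ is symmetric, exchanging the roles of rows and columns (equivalently, apply the same argument to $M^\intercal$, whose decomposition is obtained by transposing each rectangle). Each rectangle $R \in \cR$ is specified by a row interval $[a_R, b_R]$ and a column interval $[c_R, d_R]$. Because the rectangles are disjoint and cover exactly the $1$-entries of $M$, every entry satisfies $M_{ij} = \sum_{R \in \cR} [i \in [a_R,b_R]]\,[j \in [c_R,d_R]]$. Exchanging the order of summation gives
\[
(Mv)_i = \sum_{j} M_{ij} v_j = \sum_{R \in \cR:\, a_R \le i \le b_R} s_R, \qquad s_R := \sum_{j=c_R}^{d_R} v_j .
\]
Disjointness is exactly what makes this identity hold without an inclusion--exclusion correction. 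A rectangle-cover would double-count entries where rectangles overlap, which is why the statement asks for a decomposition.

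The algorithm has three steps, each of linear cost.
\begin{enumerate}
\item Compute the prefix sums $P_0 = 0$ and $P_j = P_{j-1} + v_j$ in $\fO(n)$ time. Then $s_R = P_{d_R} - P_{c_R - 1}$ costs $\fO(1)$ per rectangle, so $\fO(|\cR|)$ in total.
\item Initialize a difference array $D \in \R^{n+1}$ to zero. For each rectangle $R$, add $s_R$ to $D_{a_R}$ and subtract $s_R$ from $D_{b_R+1}$. This is $\fO(1)$ per rectangle.
\item Set $(Mv)_i = \sum_{k \le i} D_k$, computed as a running sum in $\fO(n)$ time. Correctness follows because a rectangle's contribution $s_R$ is counted precisely for indices $i$ with $a_R \le i \le b_R$.
\end{enumerate}
The total running time is $\fO(n + |\cR|)$.

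There is no real obstacle here; the only points needing care are boundary handling and input representation. For the boundaries, the index $b_R+1$ may equal $n+1$, which is why $D$ has length $n+1$, and $P_0$ handles rectangles with $c_R = 1$. For the representation, we assume each rectangle is given by its four boundary indices, so that reading $\cR$ takes $\fO(|\cR|)$ time.
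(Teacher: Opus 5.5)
Your proposal is correct and is essentially the paper's proof. Prefix sums of $v$ give each rectangle's range sum in $\fO(1)$, a difference array on the output is updated at the two endpoints of the rectangle's other interval, and a final prefix-sum pass recovers $Mv$; your length-$(n+1)$ array simply replaces the paper's check that skips the subtraction when the interval ends at $n$.
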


In light of the above two lemmas, it only remains to show how to efficiently preprocess $M$ to compute a rectangle decomposition of small cardinality. A similar task is also solved by Lingas~\cite{lingas2002geometric} for rectangle covers. Again, the fact that we require the rectangles to be disjoint turns out to be advantageous, and in contrast to~\cite{lingas2002geometric}, we can avoid a logarithmic factor overhead both in the running time and in the size of the rectangle decomposition constructed. We obtain the following.


\begin{restatable}{lemma}{lemaa}
\label{lem3} Let $M \in \{0,1\}^{n \times n}$ be a binary matrix admitting an optimal rectangle-decomposition~$\cR$. Then a rectangle-decomposition $\cR'$ of $M$ with $|\cR'| \in \fO(|\cR|)$ can be found in time $\fO(n^2)$.
\end{restatable}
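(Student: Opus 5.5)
The plan is to prove a lower bound on $|\cR|$ by a local, corner-counting quantity, and then to give a simple greedy decomposition whose size is bounded by the same quantity.

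\textbf{Lower bound via corners.} Pad $M$ with a frame of $0$s and consider the $(n+1)^2$ grid vertices. Each vertex is the centre of a $2\times 2$ window of entries. Call a vertex \emph{nontrivial} if its window contains an odd number of $1$s, or is a diagonal pattern ($1$s on exactly one diagonal). Let $V$ be the number of nontrivial vertices.

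I will show that for any rectangle-decomposition, every nontrivial vertex is a corner of at least one rectangle of the decomposition. Since a rectangle has $4$ corners, this gives $|\cR| \ge V/4$.

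The proof is a short case analysis on the window.
\begin{itemize}
\item If the window has exactly one $1$, that cell's rectangle clearly has a corner at the vertex.
\item If the window is a diagonal pattern, each $1$-cell's rectangle is blocked by two $0$-neighbours, so it again has a corner at the vertex.
\item The three-$1$s case is the one needing care. Say the $0$ is at the top-left $a$, with $b$ to its right, $c$ below it, and $d$ diagonal. If $b$ and $c$ lay in a common rectangle, that rectangle would contain $a$, which is impossible. Hence one of them, say $c$, lies in a rectangle $R$ not containing $b$. Also $R$ cannot contain $a$, so $R$ cannot extend upward past the vertex. If $R$ also does not contain $d$, then $R$ cannot extend right past the vertex either, so the top-right corner of $R$ is the vertex. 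If $R$ does contain $d$, then $R \ni c,d$ but $R \not\ni a,b$, so the top edge of $R$ passes through the vertex. In that case the rectangle containing $b$ has left side on the column line and bottom edge on the row line (it avoids $a$ and $d$), so its bottom-left corner is the vertex.
\end{itemize}

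\textbf{Greedy construction in $\fO(n^2)$.} In each row, split the $1$-entries into maximal horizontal runs; this takes linear time per row. Then scan the rows top to bottom. A run whose column interval is identical to a run in the previous row extends that run's current rectangle downward. Any other run opens a new rectangle. The result is a disjoint cover of all $1$s, and it is computed in $\fO(n^2)$ time, using e.g.\ an array indexed by left endpoint for the matching.

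\textbf{Charging.} Each newly opened rectangle starts with a run $[j,k]$ in row $i$ that is not identical to any run in row $i-1$. I charge it to a nontrivial vertex on row line $i$ at column line $j$ or $k+1$.
\begin{itemize}
\item If cell $(i-1,j)$ is $0$, or if $(i-1,j-1)$ is $1$, then the top-left vertex of the run is nontrivial. This is because $(i,j-1)=0$ and $(i,j)=1$ by maximality of the run.
\item Otherwise a run in row $i-1$ starts exactly at column $j$. That run differs from $[j,k]$, so its right endpoint differs from $k$. A symmetric check then shows the top-right vertex at column line $k+1$ is nontrivial.
\end{itemize}
Each vertex lies on a single row line and is the left or right end of at most two runs in row $i$, so it receives $\fO(1)$ charges. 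Therefore $|\cR'| \le \fO(V) \le \fO(|\cR|)$.

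The main obstacle is the lower bound, specifically the three-$1$s case: showing that a nontrivial vertex is genuinely a \emph{corner} of some rectangle, not merely a point on some rectangle's side. I handle it by the blocking argument above.
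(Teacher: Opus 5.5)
Your lower bound $|\cR|\ge V/4$ is correct, including the three-$1$s case, and the row-run greedy with its $\fO(n^2)$ implementation is fine. The gap is in the second bullet of the charging step, which is false as stated.

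Knowing that row $i-1$ has a run $[j,k']$ with $k'\neq k$ does not make the vertex at column line $k+1$ nontrivial. The ``symmetric check'' only shows that either this vertex is nontrivial, or \emph{some} run of row $i-1$ ends exactly at column $k$. That run need not be the one starting at $j$.

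Concretely, take row $i-1=(1,0,1)$ and row $i=(1,1,1)$ inside the zero frame. The run $[1,3]$ of row $i$ matches no run of row $i-1$, so it opens a new rectangle. Yet the windows at its two top corners are $\begin{psmallmatrix}0&1\\0&1\end{psmallmatrix}$ and $\begin{psmallmatrix}1&0\\1&0\end{psmallmatrix}$, and both are trivial. The nontrivial vertices on that row line sit at column lines $2$ and $3$, strictly inside the run's top edge. So your charge is not always defined.

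The repair is short: charge the new rectangle to any nontrivial vertex on row line $i$ at column lines $j,j+1,\dots,k+1$. Such a vertex exists. Suppose all of them were trivial.
\begin{enumerate}
\item The left end forces $M_{i-1,j-1}=0$ and $M_{i-1,j}=1$.
\item An interior vertex has bottom pair $1,1$, so it is trivial only if $M_{i-1,c-1}=M_{i-1,c}$. This gives $M_{i-1,c}=1$ for all $c\in[j,k]$.
\item The right end forces $M_{i-1,k+1}=0$.
\end{enumerate}
Then $[j,k]$ would be a run of row $i-1$, a contradiction. Distinct runs of row $i$ are separated by a $0$, so these closed top edges are pairwise disjoint. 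Hence every vertex is charged at most once, and $|\cR'|\le V\le 4|\cR|$.

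With this fix, your argument is a valid alternative to the paper's. The paper partitions the orthogonal polygons by vertical rays from concave vertices. It then imports the classical bounds $C-H+1$ for the cut and $C/2-H+1$ for the optimum, getting ratio $3$. Your $V/4$ is essentially that same classical lower bound, but proved locally from scratch. Your greedy avoids polygon data structures and ray shooting, at the cost of ratio $4$ instead of $3$.
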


The  algorithm is straightforward, identifying corners of the rectangles in the decomposition and greedily constructing rectangles in a single pass; we describe the details in \S\,\ref{sec2}. Lemmas~\ref{lem1}, \ref{lem2}, \ref{lem3} together imply the following theorem.

\begin{theorem}\label{thm1}
Let $M \in \{0,1\}^{n \times n}$ be a $d$-twin-ordered binary matrix. We can preprocess $M$ in time $\fO(n^2+dn)$ so that for arbitrary column vectors $v \in \mathbb{R}^n$, we can compute the products $M v$ or $v^\intercal M$ in time $\fO(dn)$.
\end{theorem}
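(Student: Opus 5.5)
The plan is to compose the three lemmas stated above, with no new ingredients beyond bookkeeping of the running times. First, since $M$ is $d$-twin-ordered, Lemma~\ref{lem1} guarantees that $M$ admits a rectangle-decomposition of size at most $d(2n-2)+1 \in \fO(dn)$. In particular, an \emph{optimal} rectangle-decomposition $\cR$ of $M$ satisfies $|\cR| \in \fO(dn)$. We never need to compute $\cR$ or know $d$; we only use it as a bound in the analysis.

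The preprocessing step runs the greedy algorithm of Lemma~\ref{lem3} on $M$ in its given ordering. The twin-ordering is exactly the ordering in which the bound of Lemma~\ref{lem1} holds, so no reordering is needed. In time $\fO(n^2)$ this produces a rectangle-decomposition $\cR'$ with $|\cR'| \in \fO(|\cR|) \subseteq \fO(dn)$. We store $\cR'$ explicitly, for example as a list of corner coordinates. This costs $\fO(|\cR'|)=\fO(dn)$ additional time and space, for a total preprocessing time of $\fO(n^2+dn)$. Any further auxiliary structure needed by the algorithm of Lemma~\ref{lem2}, such as bucketing rectangles by their row or column boundaries, can be built once here in $\fO(n+|\cR'|)$ time.

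For each query vector $v \in \mathbb{R}^n$, we invoke Lemma~\ref{lem2} with the decomposition $\cR'$. This computes $Mv$ or $v^\intercal M$ in time $\fO(n+|\cR'|) = \fO(n+dn)=\fO(dn)$, using $d \ge 1$. The degenerate case $d=0$ contributes only the $+1$ term in Lemma~\ref{lem1}, so the bound remains $\fO(n)$, which is absorbed if we read $\fO(dn)$ as $\fO(\max(d,1)\,n)$.

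The only point needing care is that Lemma~\ref{lem3} is stated relative to an optimal decomposition of $M$. Its guarantee therefore transfers automatically through the inequality $|\cR_{\mathrm{opt}}| \le d(2n-2)+1$, so there is no genuine obstacle. All the substantive work lies in the proofs of Lemmas~\ref{lem2} and~\ref{lem3}.
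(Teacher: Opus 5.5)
Your proposal is correct and matches the paper's argument: the paper obtains Theorem~\ref{thm1} by composing Lemma~\ref{lem1} (a decomposition of size $\fO(dn)$ exists), Lemma~\ref{lem3} (one of size $\fO(|\cR|)$ is found in $\fO(n^2)$ time), and Lemma~\ref{lem2} (each product takes $\fO(n+|\cR'|)$ time). Your remark that $\fO(dn)$ should be read as $\fO(\max(d,1)\,n)$ for the degenerate case $d=0$ is a harmless extra detail.
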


Observe that the result implies that multiplying two $n \times n$ matrices $A$ and $B$ takes time ${\fO}(dn^2)$, whenever at least one of the matrices is $d$-twin-ordered. As discussed, this improves and significantly simplifies the results of~\cite{bonnet2023twin5} that involved extra logarithmic factors and an exponential dependence on $d$, while requiring both matrices to be $d$-twin-ordered.

\medskip

The general bounded twin-width case appears more challenging, since the ordering of rows and columns does not immediately yield an efficient contraction sequence, or a small rectangle-decomposition in the sense of Lemma~\ref{lem1}. 

In fact, to our knowledge, the only known way to beat the trivial quadratic bounds for matrix-vector multiplication in this setting is by using the bounded VC-dimension algorithm of~\cite{anand2025structural}, since bounded twin-width implies bounded VC-dimension~\cite{bonnet2024neighbourhood}. Note, however, that the matrix multiplication bounds obtained in this way are close to cubic, and already for VC dimension $2$ are worse than what is possible via fast matrix multiplication algorithms. Moreover, several of the natural examples of low VC-dimension families mentioned in~\cite{anand2025structural} (e.g., adjacency matrices of planar, or $H$-minor-free graphs) have, in fact, bounded twin-width, making it pertinent to take advantage of this stronger structural property. 

Our main theorem achieves the following.

\begin{theorem}\label{thm2}
Let $M \in \{0,1\}^{n \times n}$ be a binary matrix with twin-width $d$. We can preprocess $M$ in time $\widetilde{\fO}_d(n^2)$ so that for arbitrary column vectors $v \in \R^n$, we can compute the products $M v$ or $v^\intercal M$ in time $\widetilde{\fO}_d(n)$.
\end{theorem}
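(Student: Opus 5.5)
The plan is to avoid computing a twin-width ordering and instead find, directly, an ordering of the rows (or columns) under which $M$ has a small rectangle-decomposition-like structure. Concretely, I would order the rows along a short traveling-salesman-type tour under Hamming distance, and then apply a row-difference encoding. Write $r_1,\dots,r_n$ for the rows of $M$. If $\pi$ is an ordering of the rows, then
\[
(Mv)_{\pi(i)} = (Mv)_{\pi(i-1)} + \bigl(r_{\pi(i)} - r_{\pi(i-1)}\bigr)\cdot v .
\]
So $Mv$ can be computed in time $\fO\bigl(n + \sum_i \|r_{\pi(i)} - r_{\pi(i-1)}\|_1\bigr)$. The same holds for $v^\intercal M$ by working with columns. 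Preprocessing stores the sparse difference vectors, which costs $\fO(n^2)$ once $\pi$ is known. It therefore suffices to show two things: (a) if $M$ has twin-width $d$, there is an ordering $\pi$ whose total Hamming cost is $\fO_d(n\log n)$ (or $\widetilde{\fO}_d(n)$); and (b) an ordering whose cost is within a polylogarithmic factor of this can be found in $\widetilde{\fO}_d(n^2)$ time.

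For (a), the key structural step is to use the merge sequence only in the analysis. First reorder $M$ so that it is $d$-twin-ordered. By Lemma~\ref{lem1} it then has a rectangle-decomposition with $\fO(dn)$ rectangles. In this canonical order, consecutive rows differ only at boundaries of rectangles, so $\sum_i \|r_i - r_{i+1}\|_1 \le 2|\cR| \cdot(\text{row-width})$. Row-width can be large, so this bound is too crude. I would instead bound the number of horizontal rectangle \emph{edges}: each rectangle contributes its top and bottom edges, of total length twice its width, and these widths are unbounded. The fix is to count differences through the merge tree instead. When two row-parts are contracted, the $\bot$-entries in their merged row number at most $d$ per part-block. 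Charging the Hamming distance between representative rows of sibling parts to the red edges, weighted by the part sizes, gives a bound of the form $\fO_d(n)$ per level of a balanced decomposition. Balancing the contraction tree (in the style of the versatile/balanced sequences of Bonnet et al.) should give $\fO(\log n)$ levels, and hence a spanning tree on the rows of total Hamming weight $\fO_d(n\log n)$. Doubling the tree into an Euler tour then gives the ordering $\pi$ with at most twice this cost.

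For (b), I do not need to find $\pi$ exactly. A minimum spanning tree of the rows under Hamming ($\ell_1$) distance has weight at most that of the tree from (a), and its Euler tour gives a valid ordering. Computing an exact MST takes $\Theta(n^3)$ time naively. I would instead use an approximate MST in $\ell_1$ for points in $\{0,1\}^n$, obtained via locality-sensitive hashing or random-projection bucketing. This gives an $\fO(\log n)$- or constant-approximate spanning tree in $\widetilde{\fO}(n^2)$ time, since the input size is already $n^2$. Alternatively, one can use a Borůvka scheme with approximate nearest-neighbour queries. Either route yields a total cost of $\widetilde{\fO}_d(n)$, and hence $\widetilde{\fO}_d(n)$ query time.

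I expect the main obstacle to be step (a): proving that bounded twin-width forces a spanning tree of total Hamming weight $\widetilde{\fO}_d(n)$ on the rows. The difficulty is that a single contraction may join parts whose representatives differ in many columns, each covered by one large rectangle. My first attempt would charge differences to the red-degree bound at each time step, using the observation that two rows in the same part differ only inside red column-parts. I would then pick representatives cleverly, for example a random member of each part, so that the expected distance is proportional to the total size of red column-parts divided appropriately. If the bound only holds up to a $d$-dependent or $\log n$ factor, that still suffices for the $\widetilde{\fO}_d$ statement.
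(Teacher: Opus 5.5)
Your overall route is the paper's. There, Theorem~\ref{thm2} is obtained by combining the coherence bound $\cH^r(M),\cH^c(M)\in 2^{2^{\fO(d)}}\cdot n\log n$ of Theorem~\ref{thm3} with the scheme of Lemma~\ref{lem7}, which is exactly your difference encoding along an approximate-MST tour (your step (b)).

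The soft spot is step (a). You name balanced sequences, but then treat (a) as open and fall back on random representatives, which do not address the real issue. In your charging, a merge is bounded by the total width of the non-constant column parts of the merged row part \emph{at the moment of the merge}, and bounded red degree alone does not control this. Take $M_{ij}=1$ iff $i$ is even. Merging all (identical) columns first and then the rows left to right is a $1$-wide sequence. Every one of its $n-1$ row merges has a red column part of width $n$, and a random representative still costs about $n/2$ per merge in expectation.

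What closes (a) is the precise balance property of Lemma~\ref{lem9}: a $D$-wide symmetric merge sequence, $D\in 2^{2^{\fO(d)}}$, in which every part of $\cR_i=\cC_i$ has size at most $Dn/i$. The lemma is stated for symmetric matrices, so apply it to $\begin{psmallmatrix}0&M\\M^\intercal&0\end{psmallmatrix}$; shortcutting an ordering of its rows to the rows of $M$ does not increase the Hamming sum. Because rows and columns are merged in lockstep, the column parts are also bounded at the time of each row merge.

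With this, your own argument finishes in one line. When two adjacent row intervals merge into a part of $\cR_i$, any row of one and any row of the other agree on every constant cell of the merged row. So they differ only inside at most $D$ column parts, each of width at most $Dn/i$. Each merge therefore costs at most $D^2n/i$ whatever the representatives, and $\sum_i D^2n/i\in\fO(D^2n\log n)$. Your ``$\fO_d(n)$ per level, $\fO(\log n)$ levels'' is correct if levels are read as dyadic ranges of $i$. Each merge of adjacent intervals corresponds to exactly one consecutive pair in the canonical order, so this bounds the canonical row-Hamming-sum directly, with no Euler tour needed.

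The paper counts the same quantity differently. It bounds the canonical sum by $\sum_{R\in\cQ}2w_R$, where $\cQ$ is the rectangle-decomposition of maximal $1$-cells of the laminar family of cells. At most $2D$ cells of $\cQ$ first appear at each step, and each has $w_R\le Dn/i$. Your merge-based count is an equivalent and slightly more direct version of this.
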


We again emphasize that in contrast to earlier twin-width-based algorithms, the result does not require a contraction-sequence to be provided as part of the input, and no specific ordering of the rows and columns of the input matrix is assumed. 

The main structural property we use relies on Hamming distances between pairs of rows and columns of a matrix. The Hamming distance $d(u,v)$ between two binary vectors $u,v \in \{0,1\}^n$ is the number of entries in which they differ, i.e., $d(u,v) = \sum_{i=1}^n{|u_i - v_i|}$.

Let $r^{(1)}, \dots, r^{(n)}$ denote the rows of matrix $M \in \{0,1\}^n$. For a given permutation $\pi$ of the indices $\{1,\dots,n\}$, we denote by $\cH_\pi^r(M)$ the \emph{row-Hamming-sum} of $M$, the sum of pairwise Hamming-distances between neighboring rows in the order given by $\pi$.
Formally, $\cH_\pi^r(M) = \sum_{i=1}^{n-1}{d(r^{(\pi(i))},r^{(\pi(i+1))})}$. 

The \emph{row-Hamming-coherence} of $M$, denoted $\cH^r(M)$ is the row-Hamming-sum of $M$ for the ordering~$\pi$ of rows that minimizes this quantity, i.e., $\cH^r(M) = \min_\pi\{\cH_\pi^r(M)\}$.
The \emph{column-Hamming-sum} $\cH_\pi^c(M)$ and \emph{column-Hamming-coherence} $\cH^c(M)$ are defined entirely analogously.

We now rely on a known algorithm for matrix-vector multiplication with preprocessing, when the matrix has low Hamming coherence. Such an algorithm was described by Bj\"{o}rklund and Lingas~\cite{BjorklundLingas} and extended by Gąsieniec and Lingas~\cite{GasieniecLingas}.


\begin{lemma}[\cite{BjorklundLingas, GasieniecLingas}]
\label{lem7}
Let $M \in \{0,1\}^{n \times n}$ be a binary input matrix with a row-Hamming-coherence $\cH^r$ . Then $M$ can be preprocessed in time $\widetilde{\fO}(n^2)$, so that for any vector $v$, the product $Mv$ can be computed in time $\widetilde{\fO}(n + \cH^r)$.
\end{lemma}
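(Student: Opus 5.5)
The plan is the Björklund--Lingas approach: replace the optimal Hamming path, which is hard to compute, by a spanning tree of the rows under Hamming distance that is approximately minimum. Then compute each $\langle r^{(i)}, v\rangle$ incrementally from its parent in the tree. Adding a virtual all-zero row $r^{(0)}$ as root, each tree edge $(r^{(p)}, r^{(i)})$ lets us obtain $\langle r^{(i)}, v\rangle = \langle r^{(p)}, v\rangle + \sum_{j:\, r^{(i)}_j \ne r^{(p)}_j} (r^{(i)}_j - r^{(p)}_j) v_j$ in time $\fO(1 + d(r^{(p)}, r^{(i)}))$. We store, for each edge, the list of differing positions with their signs. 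A traversal of the tree from the root then computes all $n$ entries of $Mv$. The query time is $\fO(n + W)$, where $W$ is the weight of the tree.

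The key steps, in order, are as follows.

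\textbf{Step 1: relate trees and paths.} The minimum spanning tree weight $\mathrm{MST}$ of the rows under Hamming distance satisfies $\mathrm{MST} \le \cH^r$, since a Hamming path is itself a spanning tree. Adding the root costs at most one extra edge, of weight at most the number of ones in a single row, i.e.\ at most $n$. So it suffices to find a spanning tree of weight $\fO(\log n)\cdot \mathrm{MST} + \fO(n)$, or even $\widetilde{\fO}(\mathrm{MST} + n)$.

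\textbf{Step 2: approximate the MST in $\widetilde{\fO}(n^2)$ time.} Computing all pairwise Hamming distances exactly costs $n^3$ naively, so we approximate instead. Hamming space embeds into $\ell_1$ isometrically. We can then use locality-sensitive hashing or random-sampling dimension reduction to get $\fO(1)$-approximate distances. Concretely, sample $\fO(\log n)$ random subsets of coordinates at each of $\fO(\log n)$ scales $2^k$. At scale $2^k$, rows whose sampled projections agree are likely to be within distance about $n/2^k$, and rows at much larger distance are unlikely to agree. Bucketing the rows by projection at each scale, in increasing-distance order, yields a Kruskal-style union of components, with edges of approximately correct weight between merged buckets. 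Each scale costs $\fO(n \cdot n)$ to read the rows plus hashing, giving $\widetilde{\fO}(n^2)$ overall. The result is a spanning tree of weight $\fO(\log n)\cdot \mathrm{MST}$ with high probability. Alternatively, we can invoke a known $\widetilde{\fO}(n^2)$-time approximate MST for $n$ points in $\{0,1\}^n$ under $\ell_1$.

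\textbf{Step 3: preprocessing output.} Once the tree is fixed, we compute the exact difference lists for its $n$ edges in $\fO(n)$ time each, i.e.\ $\fO(n^2)$ in total. We also fix a traversal order. The exact tree weight $W$ satisfies $W \le \widetilde{\fO}(\cH^r + n)$, which gives query time $\widetilde{\fO}(n + \cH^r)$ as claimed. For $v^\intercal M$, the same construction is applied to the columns.

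The main obstacle is Step 2: guaranteeing a near-minimum tree in near-quadratic time without computing all pairwise distances. I would first try the multi-scale sampling argument. The delicate part is the analysis bounding, for each scale, the number of MST edges that are cut and the weight of the replacement edges, using a Kruskal-type charging argument.
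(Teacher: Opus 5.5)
Your proposal is correct and follows essentially the route the paper sketches for this cited lemma: an approximate minimum spanning tree of the rows under Hamming distance, built in $\widetilde{\fO}(n^2)$ time with an $\fO(\log n)$ loss, followed by incremental difference updates. The only variation is that you traverse the tree directly instead of shortcutting it into an ordering $\pi$, which avoids the factor $2$. One small correction to Step 2: with only $\fO(\log n)$ bit-sampling tables per scale, the near/far distance gap must be $\Theta(\log n)$ rather than $\fO(1)$, and that gap is exactly the source of the $\fO(\log n)$ loss in the tree weight.
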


Assuming that the ordering $\pi$ realizing the optimal row-Hamming-sum is known, the algorithm is straightforward. It works by representing successive rows in terms of their difference from the previous row. Then, the contribution of each row to the result can be computed from the contribution of the previous row in the ordering, adjusted with the effect of the changed elements. Since the effect of each change can be added in constant time, the total time is proportional to the total row-Hamming-sum $\cH_\pi^r(M)$. We note that low Hamming-sum orderings were similarly used for matrix multiplication in~\cite{anand2025structural}.

Note, however, that the optimal ordering $\pi$ is not available. Let us sketch how we can obtain an approximation of it, during preprocessing. 
(Finding the optimum is intractable  as it amounts to solving a metric TSP instance.) An ordering that yields at most twice the optimal row-Hamming-sum can be obtained by the well-known approximation based on a minimum spanning tree (MST). Computing an MST of the row vectors under Hamming distance would na\"{i}vely require $\fO(n^3)$ time, which may be affordable for preprocessing. Utilizing fast approximate nearest neighbor data structures (e.g., based on random projections), the task can be solved in $\widetilde{\fO}(n^2)$ time, at the cost of a further $\fO(\log{n})$ factor loss in approximating $\cH^r(M)$. We thus obtain an ordering that yields a row-Hamming-sum of $\fO(\log{n}) \cdot \cH^r$. Since this is the ordering we use for matrix-vector products, an additional logarithmic factor also appears in the runtime of that operation.

We use both the preprocessing and multiplication part of this algorithm as a black box, and refer to~\cite{BjorklundLingas, GasieniecLingas, anand2025structural} for details. An analogous result for the product $v^{\intercal}M$ clearly holds by replacing row-Hamming coherence $\cH^r$ by column-Hamming coherence $\cH^c$.

It remains to connect the twin-width and Hamming coherence measures of matrices. Our result, proved in \S\,\ref{sec4} is the following.

\begin{restatable}{theorem}{thma}
\label{thm3}
Let $M \in \{0,1\}^{n \times n}$ be a binary matrix of twin-width $d$. Then both the row-Hamming-coherence and column-Hamming-coherence of $M$ can be bounded as $\cH^r(M), \cH^c(M) \in 2^{2^{\fO(d)}} \cdot n \log{n}$.
\end{restatable}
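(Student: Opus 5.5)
Since Hamming distances between rows do not depend on how the columns are ordered, and $\cH^r$ already minimizes over all row orders, $\cH^r(M)$ is invariant under permuting rows and columns. The same holds for $\cH^c$. So we may assume without loss of generality that $M$ is itself $d$-twin-ordered. By symmetry (transposition) it suffices to bound $\cH^r(M)$. We will not construct a clever ordering: we simply take the identity order of rows and bound $\sum_{i<n} d(r^{(i)},r^{(i+1)})$. This sum is the number $N$ of \emph{vertical changes}, i.e., pairs $(i,j)$ with $M_{i,j}\neq M_{i+1,j}$.

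The structural input is the grid/mixed-minor theory of twin-ordered matrices, which we take from \cite{bonnet2021twin,bonnet2024twin4}.
\begin{itemize}
\item A $d$-twin-ordered matrix has no $t$-mixed minor for some $t=2^{\fO(d)}$.
\item By the Marcus--Tardos-type counting argument, in any division of such a matrix into $a$ consecutive row blocks and $b$ consecutive column blocks, at most $c_t(a+b)$ cells are \emph{mixed} (neither all rows equal nor all columns equal), where $c_t=2^{\fO(t)}=2^{2^{\fO(d)}}$.
\end{itemize}
This constant is exactly the factor claimed in the statement.

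To turn the mixed-cell bound into a bound on $N$, we use dyadic column scales. For each row index $i$, consider the $2\times n$ strip formed by rows $i,i+1$. For each level $\ell=0,\dots,\lceil\log n\rceil$, split the columns into dyadic intervals of length $2^\ell$.
\begin{itemize}
\item A $2\times 2^\ell$ cell of the strip that contains a vertical change cannot be \emph{vertical}, meaning its two rows equal; so it is either mixed or \emph{horizontal}, meaning each of its two rows is constant.
\item A horizontal cell with a change has every column differing, so it contains exactly $2^\ell$ changes.
\item We charge each change $(i,j)$ to the largest dyadic interval $I\ni j$ whose cell is horizontal. If there is none, the level-$0$ cell is itself mixed.
\item A horizontal cell at level $\ell$ that is maximal has a mixed dyadic parent at level $\ell+1$ (or $\ell$ is the top level). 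Since a mixed cell has only two children, the number of changes charged at level $\ell$ is at most $2^{\ell}\cdot 2\cdot\#\{\text{mixed }2\times 2^{\ell+1}\text{ cells}\}$.
\end{itemize}

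It remains to bound, for each level $\ell$, the number of mixed $2\times 2^{\ell+1}$ cells by roughly $c_t\, n/2^{\ell}$. Then each level contributes $\fO(c_t n)$ changes, and summing over $\fO(\log n)$ levels gives $N\in 2^{2^{\fO(d)}} n\log n$.

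This is the step I expect to be the main obstacle. Naively applying the Marcus--Tardos bound to the division into $n/2$ row pairs and $n/2^{\ell+1}$ column blocks only yields $c_t(n/2+n/2^{\ell+1})$ mixed cells, which is too weak by a factor $2^\ell$.

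What I would try first is to coarsen the rows as well. Group consecutive rows into blocks of height $h=2^{\ell+1}$, so that the division is roughly square, with $n/h$ blocks in each direction and hence $\fO(c_t n/h)$ mixed square cells.
\begin{itemize}
\item Inside a non-mixed $h\times h$ cell, the consecutive-row $2\times h$ subcells are all non-mixed. In a horizontal square (each row constant) they are each horizontal, hence never charged as mixed parents. In a vertical square (all rows equal) they contain no changes.
\item Inside each mixed $h\times h$ cell, we need to bound the number of mixed $2\times h$ row-pair subcells by $\fO(c_t)$. For this I would apply the mixed-minor bound recursively inside the cell, or use the twin-width-preserving property of submatrices.
\end{itemize}
If that per-cell bound fails, the fallback is to refine the charging: charge a change to the level where its row index and column index both first fall in a non-mixed dyadic square. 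This keeps the counting within square divisions, at the cost of another $\log n$ factor that I would then try to remove.
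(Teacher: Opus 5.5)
\textbf{There is a genuine gap, and it comes before the step you flag.} After reducing to a $d$-twin-ordered $M$, you bound the row-Hamming-sum of the \emph{identity} row order. That quantity need not be $\fO_d(n\log n)$.

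Let row $i$ of $M$ be constantly equal to $i \bmod 2$. Merging all columns first creates no non-constant cell, since every row is constant. Merging the rows of the resulting single column afterwards keeps at most one non-constant cell in every row and column. So $M$ is $1$-twin-ordered. But consecutive rows differ in every entry, so the identity order has row-Hamming-sum $n(n-1)$. In your charging, all of this lands at the top level, because every full-width strip is horizontal, and you never bound that case.

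Now replace the right half of this $M$ by zeros. It is still $1$-twin-ordered (merge each column half first). All $n-1$ full-width row-pair strips are now mixed, they lie inside the single mixed $n\times n$ square, and there are about $n^2/2$ vertical changes. This refutes both the estimate you need, namely $\fO(c_t n/2^{\ell})$ mixed $2\times 2^{\ell+1}$ cells (take $2^{\ell+1}=n$), and the hoped-for $\fO(c_t)$ per-square bound. Mixed-freeness of a twin-order simply does not control vertical changes.

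Relatedly, $d$-twin-ordered matrices are $(2d+2)$-mixed-free, so your $c_t$ would really be $2^{\fO(d)}$. A working version of your argument would therefore improve the dependence on $d$, which the paper explicitly leaves open — another sign that something is off.

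\textbf{What is missing is a row order in which merged parts stay small.} The paper passes to the symmetric matrix $\begin{psmallmatrix}0&M\\M^\intercal&0\end{psmallmatrix}$ and uses balanced symmetric merge sequences (Lemma~\ref{lem9}). After a suitable reordering there is a $D$-wide symmetric merge sequence, with $D\in 2^{2^{\fO(d)}}$, in which every part of the $i$-th division has size at most $Dn/i$.

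In that order, the row-Hamming-sum is at most $\sum_R 2w_R$ for any rectangle-decomposition, because every vertical change lies on the top or bottom edge of some rectangle. Take the inclusion-maximal $1$-cells of the laminar family of all cells as the decomposition. Then at most $2D$ rectangles first appear in division $i$, each of width at most $Dn/i$. The harmonic sum gives $\fO(D^2 n\log n)$. So the doubly exponential factor comes from $D$ in Lemma~\ref{lem9}, not from a Marcus--Tardos constant.
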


Theorem~\ref{thm2} then immediately follows from Theorem~\ref{thm3} via Lemma~\ref{lem7}.

Crucially, given a matrix of bounded twin-width, the bound on the Hamming coherences (Theorem~\ref{thm3}) holds regardless of the ordering of rows and columns. To see this, observe that, for the row-Hamming-sum, a reordering of the rows is anyway undone by the optimal choice of $\pi$, and the ordering of columns does not affect the Hamming sums, by the commutativity of addition in the definition of Hamming distance. 

Improving the dependence on $d$ in the bound of Theorem~\ref{thm3} is an interesting open question. In terms of $n$, we show our bound to be best possible (proof in \S\,\ref{sec4}).

\begin{restatable}{theorem}{thmb}
\label{thm4}
For every $n$ there exists a binary matrix $M \in \{0,1\}^{n \times n}$ with twin-width $d \leq 3$ and row- and column- Hamming-coherence $\cH^r(M), \cH^c(M) \in \Omega(n \log{n})$.
\end{restatable}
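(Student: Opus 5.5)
We propose to exhibit an explicit recursive family and lower-bound its Hamming coherences through a nearest-neighbour argument. The lower-bound tool is elementary. In any ordering $\pi$ of the rows, every row except possibly the last is followed by some other row. The distance it pays to that successor is at least its distance to its nearest other row. Hence
\[
\cH^r_\pi(M) \;\geq\; \sum_{i}\delta(i) \;-\; \max_i \delta(i),
\qquad \delta(i) = \min_{j\neq i} d\bigl(r^{(i)},r^{(j)}\bigr).
\]
It therefore suffices to build an $n\times n$ matrix of twin-width at most $3$ in which almost every row has all other rows at Hamming distance $\Omega(\log n)$, and likewise for columns. Note that no nontrivial lower bound is possible with repeated rows, so all rows must be distinct and spread out.

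We first rule out the obvious candidates, which pins down what the construction needs. Ancestor-incidence matrices of a complete binary tree (rows indexed by leaves, columns by nodes) have Hamming distance equal to tree distance. A tree metric always admits a tour of cost about twice the total edge weight, which is $\fO(n)$ here. Permutation matrices and threshold-type matrices, $M_{ij}=[j\le f(i)]$, are similarly $\fO(n)$. Conversely, placing row-points on a sparse two-dimensional grid gives $\Omega(n^{3/2})$, which by Theorem~\ref{thm3} forces unbounded twin-width. The target is therefore in between: a point set in the Hamming cube of dimension $n$ whose minimum pairwise distance is $\Theta(\log n)$, while remaining "geometrically one-dimensional" enough to have bounded twin-width. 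Concretely, we aim for a self-similar construction $M_k$ of size $2^k\times 2^k$ built from two copies of $M_{k-1}$ on disjoint row blocks. A fixed gadget of twin-width $\le 3$ (for instance a shifted identity or staircase block) is placed on the columns so that the two copies differ on a new block of columns. That gadget should also add exactly one fresh differing coordinate between any two rows that were previously at distance $\ge k-1$. By induction, all pairwise row distances in $M_k$ would then be $\ge k$, giving $\sum_i\delta(i)\ge n\log_2 n$ and the claim. The construction should be made symmetric under transposition (e.g.\ $M_k^\intercal$ equal to $M_k$ up to permuting rows and columns) so that the column statement follows from the row statement.

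For the twin-width bound we will give an explicit contraction sequence. First contract inside each copy of $M_{k-1}$ recursively, in parallel, keeping the red degree at most $3$ by induction. Then merge the two resulting halves, checking that the gadget columns contribute at most one extra $\bot$ per row and per column at every stage.

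The main obstacle is designing the gadget so that both requirements hold simultaneously: the minimum distance must grow by one per level, while the red degree stays bounded uniformly in $k$. Naive ways of separating the two halves (complementing a block, or adding a full $1$-block) either fail to increase all pairwise distances or create $\bot$-columns of unbounded length once the halves are merged. We would first try an interleaved Gray-code-style ordering of columns, in which each level's new coordinate is a single column per pair of sibling rows. We would then verify the red-degree invariant for that specific choice before committing to it.
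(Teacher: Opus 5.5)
\textbf{The gap.} The gadget is never designed, so no matrix is actually produced. More importantly, the property you are aiming for cannot be achieved.

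Your induction already fails. The two copies of $M_{k-1}$ share their old columns, so corresponding rows of the two copies are identical there. The gadget alone would therefore have to separate them by $k$, not add ``one fresh coordinate''.

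More fundamentally, bounded twin-width forces most rows to have a near neighbour. Take the symmetrisation $M'$ (with $N=2n$) and the $D$-wide balanced symmetric merge sequence of Lemma~\ref{lem9}.
\begin{itemize}
\item Suppose the singleton part $\{v\}$ is first merged when the division has $j$ parts. Then $v$ and every other index in its new part agree outside at most $D$ non-constant cells, each of width at most $DN/j$. Hence $\delta(v)\le D^2N/j$.
\item When the division has $m$ parts, at most $m$ indices are still singletons.
\item Combining the two, only $O(D^2N/K)$ indices of $M'$ have nearest-neighbour distance above $K$.
\item Now translate back to $M$. A row whose close partner in $M'$ is a column has weight at most $K$, and any two such rows are within $2K$ of each other. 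So at most $O(D^2n/K)$ rows of $M$ are at distance more than $2K$ from all other rows.
\end{itemize}
For twin-width $3$, $D$ is an absolute constant, so only $O(n/\log n)$ rows can be $\Omega(\log n)$-separated from all others. Your invariant that all pairwise row distances in $M_k$ are at least $k=\log_2 n$ therefore fails for every bounded-twin-width family once $n$ is large. Consequently the estimate $\cH^r_\pi(M)\ge\sum_i\delta(i)-\max_i\delta(i)$ cannot deliver $\Omega(n\log n)$ in the way you plan.

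\textbf{The missing idea.} You need to lower-bound the optimal ordering through a hierarchy of clusters rather than through nearest neighbours. The paper takes $n=2\cdot 3^i$, $M_0=\begin{psmallmatrix}1&0\\0&1\end{psmallmatrix}$ and $M_i=\begin{psmallmatrix} M_{i-1}&0&1\\ 1&M_{i-1}&0\\ 0&1&M_{i-1}\end{psmallmatrix}$.
\begin{itemize}
\item Its minimum row distance is only $2$. However, rows from different thirds are at distance exactly $2n/3$, while rows from the same third are within $n/3$.
\item An exchange argument shows that an optimal ordering keeps each third contiguous. The exchange is to cut out a run of foreign rows lying between two rows of the same third and move it to the end; this strictly decreases the sum.
\item The optimal ordering then pays $2n/3$ at each of the two interfaces, plus recursively the optimal cost of $M_{i-1}$ inside each third. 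This gives $\cH(i)=3\cH(i-1)+4n/3$, hence $\Omega(n\log n)$.
\item Bounded twin-width (in fact $3$-twin-orderedness) follows by contracting the rows and columns of the three diagonal blocks one block at a time, using induction inside each block. The constant off-diagonal blocks never create $\bot$.
\item The column statement is symmetric.
\end{itemize}
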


\medskip

We have shown how to efficiently preprocess a matrix of bounded twin-width, respectively bounded twin-order, such as to allow efficient matrix-vector multiplications.  

The two cases sit at two extremes. Typically, in past applications of twin-width, bounded twin-width matrices are first brought into a canonical ordering of both rows and columns called \emph{$d$-mixed-free} ordering, that is less constrained than $d$-twin-order. It is known that every $d$-twin-ordered matrix is also $(2d+2)$-mixed-free~\cite{bonnet2021twin}, but a matrix with bounded mixed-freeness may be of unbounded twin-order (in the same fixed ordering). We give the precise definition in \S\,\ref{sec3}.

For a matrix in $d$-mixed-free ordering, a bounded twin-width merge sequence can be found, or alternatively, the matrix can be brought into an $f(d)$-twin-ordered form. The existing algorithms for this task however result in a blow-up in complexity according to $f(d) = 2^{2^{\fO(d)}}$.

We show that in the case of matrix multiplication this transformation can be avoided and the dependence on the complexity-parameter can be reduced from doubly- to singly-exponential, by using a simpler characterization of $d$-mixed-free matrices. 

\begin{restatable}{lemma}{lemb}
\label{lem6}
Let $M \in \{0,1\}^{n \times n}$ be a $d$-mixed-free binary matrix. Then $M$ admits a rectangle-decomposition $\cR$ with $|\cR| \in 2^{\fO(d)} \cdot n$.
\end{restatable}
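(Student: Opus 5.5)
The plan is to go through local ``corners'' rather than a merge sequence. I would bound the number of mixed $2\times 2$ windows of $M$ using a Marcus--Tardos type argument, and then turn few corners into few rectangles.

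\emph{Step 1: from corners to rectangles.} Pad $M$ with a zero row and column on every side, giving $M'$ of size $(n+2)\times(n+2)$. Call a contiguous $2\times 2$ window of $M'$ \emph{mixed} if its two rows differ and its two columns differ. Equivalently, it contains an odd number of $1$s or is a diagonal pattern. Every corner of a maximal horizontal $1$-run meeting a change in the run pattern sits in a mixed window. Consider the greedy decomposition in which each row is split into maximal runs and vertically identical runs are stacked. In it, every rectangle has its top-left corner in a mixed window, and each mixed window is charged $\fO(1)$ times. Hence $M$ has a rectangle-decomposition of size $\fO(W+n)$, where $W$ is the number of mixed windows. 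It then suffices to show $W\in 2^{\fO(d)}\cdot n$.

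\emph{Step 2: mixed windows as a pattern-avoiding matrix.} Fix one of the four parity classes $(\alpha,\beta)\in\{0,1\}^2$ and consider the windows whose top-left index is $(2a+\alpha,\,2b+\beta)$. Group rows into consecutive pairs starting at parity $\alpha$, and columns likewise starting at $\beta$. This gives a division of $M'$ into roughly $\tfrac n2\times\tfrac n2$ cells, each cell being exactly one such window (boundary rows and columns form singleton parts). Define a binary matrix $B$ of size about $\tfrac n2\times\tfrac n2$ with $B_{a,b}=1$ iff the corresponding window is mixed. Suppose $B$ admitted a division into $d\times d$ consecutive blocks each containing a $1$. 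Coarsening the pairing of $M'$ along it would give a $d\times d$ division of $M'$, and hence essentially of $M$ (the zero padding does not create mixedness), in which every cell contains a mixed window and is therefore itself mixed. That is a $d$-mixed minor, a contradiction. So $B$ has no $d\times d$ ``grid minor'' of ones.

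\emph{Step 3: the extremal bound.} I would invoke the Marcus--Tardos theorem in the quantitative form of Fox and Cibulka--Kynčl. An $N\times N$ binary matrix with no $d\times d$ division all of whose blocks are nonzero has at most $2^{\fO(d)}\cdot N$ ones. This is the $J_{d\times d}$-grid version, which is equivalent up to constants to avoiding a $d$-permutation pattern and inherits the $2^{\fO(d)}$ bound. Summing over the four parity classes gives $W\in 2^{\fO(d)}\cdot n$, and Step 1 yields $|\cR|\in 2^{\fO(d)}\cdot n$.

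\emph{Main obstacle.} The delicate points are the exact definition of ``mixed'' used in \S\,\ref{sec3}, and the charging argument in Step 1. For the definition, I need a cell containing a mixed $2\times2$ window to be mixed. For Step 1, I need every rectangle of the stacked-runs decomposition to be injectively, up to $\fO(1)$, assigned to a mixed window, including diagonal patterns and boundary effects. I would verify this by showing that the top-left corner of each rectangle is a window with exactly one $1$ or a diagonal or an L-shape, all of which are mixed.
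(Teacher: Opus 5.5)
Your approach works overall, but the charging in Step 1 is false as stated, and so is the check you propose for it in your last paragraph. The top-left corner of a stacked-runs rectangle need not lie in any mixed window.

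Suppose three consecutive rows of $M'$ read $011110$, $011100$, $011110$ on columns $c-1,\dots,c+4$. The middle row $i$ then contributes the rectangle $\{i\}\times[c,c+2]$. All four windows containing its top-left cell are $\begin{psmallmatrix}0&1\\0&1\end{psmallmatrix}$ or all-ones. With $01010$ above $01110$, neither top corner window of the lower run is mixed either, so no corner-based rule works.

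The fix is to charge along the whole top edge. A rectangle $[i_1,i_2]\times[c,e]$ starts at row $i_1$ precisely because the restriction $u$ of row $i_1-1$ of $M'$ to columns $c-1,\dots,e+1$ differs from $w=0\,1^{e-c+1}\,0$, the restriction of row $i_1$. A non-mixed window on rows $i_1-1,i_1$ and columns $j,j+1$ has equal rows or equal columns. In both cases $u_j\oplus w_j=u_{j+1}\oplus w_{j+1}$. So if all these windows were non-mixed, $u\oplus w\equiv 1$, i.e.\ $u=1\,0^{e-c+1}\,1$. Then the window at columns $c-1,c$ would be the diagonal $\begin{psmallmatrix}1&0\\0&1\end{psmallmatrix}$, which is mixed, a contradiction. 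Charging each rectangle to such a mixed window is injective, because the extended intervals $[c-1,e+1]$ of distinct maximal runs in one row share at most one column. This gives $|\cR|\le W$ directly.

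There are also two smaller slips.
\begin{itemize}
\item The parenthetical that zero padding creates no mixedness is false. Padding an all-ones block makes its four corners mixed windows and yields a $2$-division with all cells mixed. It is harmless, though: restrict a $d$-division of $M'$ with all cells mixed to $M$ and absorb the outermost row and column intervals into their neighbours. This gives a $(d-2)$-division of $M$ with all cells mixed, so $M'$ is $(d+2)$-mixed-free.
\item In Step 3, the grid form is not inherited from the permutation form. A $d$-grid minor contains every $d$-permutation, so the implication runs the other way. Cite the $2^{\fO(d)}$ bound for $d$-grid-free matrices directly.
\end{itemize}

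With these repairs your proof is correct and takes a different route from the paper's. The paper does not prove the corner bound; it cites Lemma~8 of \cite{matrix_ds} ($2^{\fO(d)}\cdot n$ corners in a $d$-mixed-free matrix). It then converts corners to rectangles through the polygon bound $\sum_P(C_P-H_P+1)$ from \S\,\ref{sec2}. Concave vertices are charged to L-shaped corners, and each polygon to its leftmost convex vertex (a corner, or a point in the first column).

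You instead re-derive the corner bound by the parity-class reduction to Marcus--Tardos, which makes the argument self-contained. You also replace the polygon geometry by the stacked-runs decomposition with an injective rectangle-to-window charge. This needs no separate count of polygons or holes and is itself a one-pass $\fO(n^2)$ construction. The paper's route is shorter given the cited lemma and reuses the machinery behind Lemma~\ref{lem3}.
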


Lemma~\ref{lem6} can be seen as analogous to Lemma~\ref{lem1}, and we can similarly employ Lemmas~\ref{lem2} and \ref{lem3} to obtain matrix-vector multiplication in time $\fO_d(n)$ after preprocessing of time $\fO_d(n^2)$, in case of $d$-mixed-free matrices.

We show (proof in \S\,\ref{sec3}) that the exponential dependence on the parameter~$d$ is unavoidable through this approach since there exist $d$-mixed-free matrices which require $2^{\Omega(d^{1/2})} \cdot n$ rectangles to cover their 1-entries.  

\begin{restatable}{lemma}{lemc}
\label{lem8}
For every $d$ and $n$, there exists a $d$-mixed-free matrix $M \in \{0,1\}^{n \times n}$ which does not admit any rectangle decomposition $\cR$ with $|\cR| \le 2^{\Omega(d^{1/2})} \cdot n$.
\end{restatable}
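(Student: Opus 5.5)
The approach is to build $M$ as a block-diagonal matrix. It consists of $n/m$ copies of a fixed gadget $A \in \{0,1\}^{m \times m}$ on the diagonal and zeros elsewhere; any remainder when $m \nmid n$ is padded with zero rows and columns. Two properties of $A$ then need to be established separately:
\begin{enumerate}[(a)]
\item every rectangle-decomposition of $A$ has at least $m \cdot 2^{\Omega(d^{1/2})}$ rectangles;
\item $M$ is $d$-mixed-free.
\end{enumerate}
A rectangle of $M$ that contains only $1$-entries cannot straddle two diagonal blocks, because the off-diagonal zones are all zero. So an optimal decomposition of $M$ splits into decompositions of the individual copies of $A$, and (a) gives $|\cR| \ge (n/m)\cdot m\, 2^{\Omega(d^{1/2})} = 2^{\Omega(d^{1/2})}\cdot n$.

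For (a) I would use a fooling-set argument: exhibit a set $F$ of $1$-entries of $A$ such that no two of them lie together in an all-$1$ contiguous submatrix. Then $|\cR| \ge |F|$. As the gadget I would take $m = 2^{k}$ with $k = \Theta(d^{1/2})$, index rows and columns by $\{0,1\}^k$ in lexicographic order, and let the entries have a recursive ``checkerboard of checkerboards'' structure. Concretely, $A_k$ is obtained from the $2\times 2$ checkerboard by substituting $A_{k-1}$ and its complement for the entries. In $A_k$, every row alternates between $0$ and $1$ on a polynomial-in-$2^k$ number of maximal runs, so picking one $1$-entry per run per row gives the fooling set. Counting runs row by row then yields $m\cdot 2^{\Omega(k)}$. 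Should this recursive gadget fail (b), the fallback is to trade $m$ against the number of runs per row, keeping only what is needed: $2^{\Omega(\sqrt d)}$ runs per row.

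For (b), take a $d\times d$ division of $M$ into row intervals $R_1,\dots,R_d$ and column intervals $C_1,\dots,C_d$, and suppose every cell is mixed. A mixed cell contains a $1$, so $R_1$ and $C_d$ meet a common diagonal block, and likewise $R_d$ and $C_1$. Since the intervals are consecutive, this forces the whole division (apart from its first and last parts) to lie inside at most two consecutive diagonal blocks. The task therefore reduces to bounding the mixed-minor number of $A$, or of two adjacent copies of $A$ with zero off-diagonal zones, by $d$.

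The main obstacle is doing (a) and (b) simultaneously: the rectangle lower bound wants many alternations per row, while mixed-freeness forbids large grids of mixed cells. I expect the square root to come from exactly this tension. Within $A_k$, a $t\times t$ division with all cells mixed should force the division lines to separate many of the $k$ recursion levels. By a pigeonhole argument over the levels, a mixed grid of side $d$ would give $k \gtrsim \sqrt d$ distinct levels each contributing a full row and column of mixed cells. So choosing $k = c\sqrt d$ for a small constant $c$ should exclude a $d\times d$ mixed minor. Making this level-by-level argument rigorous is the step I would work out first, and I would adjust the gadget if the bound comes out weaker.
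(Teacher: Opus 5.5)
\textbf{There is a genuine gap: the gadget you propose fails property (b), and you give no working replacement.} Substituting $A_{k-1}$ and its complement into the $2\times 2$ checkerboard gives $A_k[x,y]=c\oplus p(x)\oplus p(y)$, where $p$ is the parity of the binary index. In lexicographic order, rows and columns therefore follow the Thue--Morse sequence $t$, and $A_k$ has only two distinct rows. Since $t_{2j}\neq t_{2j+1}$ for every $j$, each $2\times 2$ cell on rows $\{2i,2i+1\}$ and columns $\{2j,2j+1\}$ is a checkerboard, hence mixed. Splitting $A_k$ into consecutive pairs of rows and columns thus gives a mixed minor of order $2^{k-1}$. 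With $k=c\sqrt d$ this exceeds $d$ for large $d$, so the level-by-level pigeonhole intuition is simply false here. If you shrink $k$ until $2^{k-1}<d$, the gadget needs only $O(d)$ rectangles per row, which is polynomial in $d$.

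Step (a) is also not valid as stated. Taking one $1$-entry per run in each row is not a fooling set: entries in adjacent identical rows share a $2\times 1$ all-$1$ rectangle. Counting runs row by row does not lower-bound rectangles either; a matrix with all rows equal has many runs per row but needs few rectangles. For Thue--Morse this can be repaired, since all-$1$ rectangles have size at most $2\times 2$, but that does not help with (b).

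Your fallback names no construction, and that missing construction is the entire difficulty. By the proof of Lemma~\ref{lem6}, an optimal rectangle-decomposition has size at most $2c(M)+n$, where $c(M)$ counts corners. Conversely, the $0/1$ matrix marking the corner positions of a $d$-mixed-free matrix is $O(d)$-grid-free. Hence any matrix witnessing the lemma yields an $O(d)$-grid-free matrix with $2^{\Omega(\sqrt d)}\,n$ ones, which is Fox's lower bound on the Marcus--Tardos constant.

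The paper builds no gadget at all. It takes Fox's $d$-grid-free matrix with $2^{\Omega(\sqrt d)}\,n$ ones and places its entries at positions $(2i,2j)$, so no two ones are adjacent and every rectangle must be $1\times 1$. It then uses that grid-free implies mixed-free. So the $\sqrt d$ does not come from a tension between rectangle count and mixed-freeness, as you expected; Lemma~\ref{lem6} gives $2^{\fO(d)}$ from above, and the $\sqrt d$ is inherited from Fox. Your block-diagonal reduction is harmless but does no work. To finish along your route you would have to reprove Fox's theorem, or cite it for the gadget, which collapses to the paper's argument.
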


\paragraph{Robustness.} Finally, we observe that all our results have a certain robustness to corruption of the input. 

For our results involving $d$-twin-ordered and $d$-mixed-free matrices (Theorems~\ref{thm1} and \ref{thm3}), notice that the runtimes depend mainly on the number of rectangles in the optimal rectangle-decomposition of the input. Flipping a single element of the matrix from $0$ to $1$ or from $1$ to $0$ clearly adds at most a constant number of rectangles to the optimal decomposition. Thus, when corrupting $r$ elements, we only add a term $\fO(r)$ to the runtimes of both the preprocessing and the matrix-vector product. Thus, perturbing a linear number of elements is simply absorbed in the asymptotic cost, while perturbing $o(n^2)$ elements still preserves a non-trivial improvement to the quadratic runtime of matrix-vector multiplication, without increasing the preprocessing cost. Note that such a perturbation can maximally increase twin-width, thus the results do not follow simply from twin-width bounds.

In case of the general bounded twin-width result (Theorem~\ref{thm2}), handling corruptions is similarly simple. Flipping an element of the matrix can increase the Hamming distance between two rows or two columns (in any ordering) by at most two, we can thus similarly accommodate perturbing $o(n^2)$ elements while preserving $o(n^2)$-time matrix vector multiplication. 

We note that the representation of matrices in terms of rectangle-decomposition and Hamming-orderings can also accommodate dynamic updates (changes of elements in $M$), more efficiently than if the entire preprocessing would be done from scratch. We postpone a study of this setting and the necessary data structures to future work.  

\paragraph{Structure of the paper.} In \S\,\ref{sec2} we discuss $d$-twin-ordered matrices, proving Lemmas~\ref{lem2} and \ref{lem3}, and thereby Theorem~\ref{thm1}.
 In \S\,\ref{sec4} we study general bounded twin-width matrices, proving Theorem~\ref{thm3}, and thereby Theorem~\ref{thm2}, as well as Theorem~\ref{thm4}. In \S\,\ref{sec3} we study $d$-mixed-free matrices, proving Lemmas~\ref{lem6} and \ref{lem8}. In \S\,\ref{sec5} we briefly conclude. 

\section{Algorithm for $d$-twin-ordered matrices}
\label{sec2}

In this section we describe our algorithm for matrix-vector multiplication when the $n \times n$ input matrix $M$ consists of elements $0$ and $1$ and is $d$-twin-ordered, thereby proving Theorem~\ref{thm1}.

We recall the two lemmas that imply the guarantees for the matrix-vector multiplication, resp., preprocessing steps.

\lema*

\lemaa*

\paragraph{Preprocessing (Lemma~\ref{lem3}).} We first describe how the input matrix $M$ can be preprocessed in time $\fO(n^2)$. 
Recall that a rectangle-decomposition $\cR$ is a collection of pairwise disjoint rectangles $R_1, \dots, R_{|\cR|}$ that together cover all $1$-entries of $M$. Formally, $R_i \cap R_j = \emptyset$ for all $R_i, R_j \in \cR$, and $M_{x,y}=1$ if and only if $(x,y) \in \bigcup_{R \in \cR} R$. 

At the end of the preprocessing, a rectangle-decomposition $\cR'$ of rectangles is obtained where $|\cR'| \in \fO(|\cR|)$.  Each rectangle $R_i \in \cR'$ is of the form $[a_i,b_i] \times [c_i,d_i]$, for integers $a_i,b_i,c_i,d_i$, where $1 \leq a_i \leq b_i \leq n$, and $1 \leq c_i \leq d_i \leq n$.  


The problem of finding a rectangle-decomposition of $M$ can be stated in a more convenient and equivalent geometric form. Merging together all pairs of square cells with $1$-entries that share a side, we obtain a set of disjoint orthogonal polygons (possibly with holes) that together cover all $1$-entries. These polygons can be efficiently constructed (in $\fO(n^2)$ time) by constructing their boundaries as the union of all edges that separate a $0$-entry and a $1$-entry. 

Notice that, since the set of polygons can be partitioned into $|\cR|$ rectangles, their total complexity (number of vertices) is $\fO(|\cR|)$, we can thus store them in a standard geometric representation that allows traversal and search operations, in space and time $\fO(|\cR|)$.

Rectangle-decompositions of $M$ now correspond in an obvious way to partitions of the obtained orthogonal polygons into axis-parallel rectangles with integer coordinates. Finding such partitions is a classical problem of computational geometry and has seen extensive study under various criteria in e.g., VLSI, computer graphics, database systems, image processing, see the survey of Keil for an overview~\cite{keil}. We sketch a procedure for partitioning such a polygon into a small number of rectangles.

A partition of an orthogonal polygon with $N$ vertices into the \emph{minimum} number of rectangles can be found in time $\fO(N^{3/2} \log{N})$~\cite{imai1986efficient}. In our application on $d$-twin-ordered matrices, we have $N \in \fO(dn)$ which makes this cost still within $\widetilde{\fO}(dn^2)$, and thus may be acceptable as preprocessing. 

However, a constant-approximation suffices and can be found more efficiently even for superlinear rectangle-decompositions~$\cR$. To achieve this, we identify all concave vertices of the polygons and shoot vertical rays (upwards or downwards, whichever is inside the polygon), until hitting the boundary of the polygon (possibly at another concave vertex). The resulting vertical segments partition the polygons into axis-parallel rectangles.

For a polygon, the optimal partitioning creates at least $C/2 - H + 1$ rectangles, where $C$ is the number of concave vertices and $H$ is the number of holes. The number of rectangles constructed by the above procedure is at most $C-H+1$, e.g., see~\cite{keil, imai1986efficient}. (The difference comes from the fact that the optimal partitioning may always find cut-segments that connect two concave vertices.)
The approximation ratio is thus at most $$\frac{C-H+1}{C/2-H+1} = 2 + \frac{H-1}{C/2 - H + 1} \leq 2 + \frac{C/4}{C/2 - C/4} = 3,$$
where in the last inequality we used $H \leq C/4$, since each hole contributes at least $4$ concave vertices.
We thus obtain a decomposition of $M$ into $\fO(|\cR|)$ rectangles.

Overall, the algorithm can be implemented by straightforward data structuring in $\fO(n^2 + |\cR|) \subseteq \fO(n^2)$ time. Here, the first term is for processing the matrix (constructing the polygon), and the second term is for operations within the polygon. (In computational geometry when coordinates are assumed to be reals, an additional $\log{n}$ factor is typically incurred, for steps such as finding the edge intersected by a ray, etc. In our case, since the coordinates are in $[n]$, these operations can be implemented in constant time using standard techniques.)

\paragraph{Matrix-vector multiplication (Lemma~\ref{lem2}).} From this point on we view the matrix $M$ in the form of a rectangle-decomposition $\cR$. We describe the procedure for multiplying $M$ with a column vector $v \in \mathbb{R}^n$. 

We shall make use of an alternative representation of a vector $v \in \mathbb{R}^n$.
The \emph{prefix-sum} representation of~$v$, denoted $\prf(v)$, is a size-$n$ vector defined as $\prf(v)_k = \sum_{i=1}^{k}{v_i}$, for $k = 1, \dots, n$. 
Clearly, computing $\prf(v)$ takes $\fO(n)$ time. 



Recall that we want to compute $x = Mv$. We first transform $v$ into $u = \prf(v)$. We will obtain the result in a representation that stores pairwise differences between neighboring entries. Precisely, instead of storing vector $x \in \mathbb{R}^n$ we will store a vector $y \in \mathbb{R}^n$ where $y_1 = x_1$ and $y_i = x_i-x_{i-1}$, for $i=2, \dots, n$.
Clearly, we have $x = \prf(y)$.

We initialize $x$ (and consequently also its difference-representation $y$) as an all-$0$ vector of size $n$. 

Now, for all rectangles $R = ([a,b] \times [c,d]) \in \cR$ we add the contribution of $R$ to $x$. Let $v_{i:j}$ denote the sum of the entries $v_i, \dots, v_j$ in vector $v$. To account for $R$, we need to add $v_{a:b}$ to the entries of the result with indices from $c$ to $d$. Notice that $v_{a:b} = u_{b}-u_{a-1}$, where for ease of notation we set $u_0 = 0$. We thus achieve the required effect by adding $u_{b}-u_{a-1}$ to $y_{c}$ and subtracting the same from $y_{d+1}$ (unless $d=n$, when this last step is not needed).

Finally, after adding the contributions of all rectangles, we transform the output vector to the standard form by taking $x = \prf(y)$.

We show the pseudocode in Figure~\ref{fig1}. Computing $v^\intercal M$ is symmetric and we omit the details. 

\begin{figure}
\small
	\textbf{Input:} ~~~Matrix $M \in \{0,1\}^{n \times n}$ given by rectangle-decomposition $\cR$, column vector $v \in \mathbb{R}^n\\
    \textbf{Output:} ~~x \in \mathbb{R}^n$, where $x = Mv$
	\begin{algorithmic}
		\State $u \gets \prf(v)$
        \State $y \gets 0^{n}$
        \ForAll{$R = ([a,b] \times [c,d]) \in \cR$}

        \State $q \gets u_{b} - u_{a-1}$  \quad \quad \textit{$\triangleright$ where $u_0 = 0$}
        \State $y_{c} \gets y_{c} + q$
        \If{\;$d < n$\;} \;$y_{d+1} \gets y_{d+1} - q$    
        \EndIf   
        \EndFor
        \State $x \gets \prf(y)$
        \State\Return $x$
        
	\end{algorithmic}

\caption{Computing matrix-vector product with rectangle-decomposition.}
\label{fig1}
\end{figure}

\paragraph{Running time.} The contribution of each rectangle in $\cR$ is added in $\fO(1)$ time. The transformations at the beginning and at the end take $\fO(n)$, concluding the proof of Lemma~\ref{lem2}.

Matrix-vector products are thus computed in time $\fO(n + |\cR|)$. To multiply two matrices, we just repeat this procedure (after a single preprocessing) for each row (or column) of the other matrix. 

We observe that the prefix-sum trick we employ to allow constant-time updates is not possible if we work over a semiring (where subtraction is not available). In that case, we can perform the updates in $O(\log{n})$ time instead using standard search-tree-based data structuring (e.g., see~\cite{lingas2002geometric} for similar considerations).

\section{Bounded twin-width and Hamming coherence}
\label{sec4}

In this section, we study the Hamming coherence of general bounded twin-width matrices, a connection that underpins our main result.

First, we need to formally define divisions and merge sequences of binary matrices.
Let $M \in \{0,1\}^{n \times n}$ be a binary matrix and let $\cR = (I_1, \dots, I_k)$ and $\cC = (J_1, \dots, J_{k'})$  be two partitions of $[n]$ into pairwise disjoint \emph{contiguous non-empty intervals}.
Then, we refer to $(\cR, \cC)$ as a \emph{division} of $M$; intuitively, the intervals in $\cR, \cC$ capture neighboring sets of rows and columns in $M$ that are \emph{merged} together. 
By rows, resp., columns of the division $(\cR,\cC)$ we refer to the submatrices of the original matrix $M$ defined by a set of rows $I_i \in \cR$, resp., by a set of columns $J_j \in \cC$.
We call a submatrix of $M$ at the intersection of rows indexed by $I_i$ and columns indexed by $J_j$ the \emph{cell} $(i,j)$ of the division $(\cR,\cC)$.
A cell is \emph{constant}, if all its entries are equal.

A \emph{merge sequence} of~$M$ is a sequence of divisions $(\cR_{2n-1}, \cC_{2n-1}), \dots, (\cR_{1}, \cC_1)$ such that (i) $\cR_{2n-1} = \cC_{2n-1}$ are partitions of $[n]$ into singletons, (ii) $\cR_1 = \cC_1$ both consist of a single part $[n]$, and (iii) $(\cR_{i+1}, \cC_{i+1})$ is obtained from $(\cR_{i}, \cC_{i})$ by merging two adjacent intervals in either $\cR_i$ or $\cC_i$ (but not both).
A merge sequence is \emph{$d$-wide} if each row and column in every division $(\cR_i, \cC_i)$ contains at most $d$ non-constant cells.
Recall that $M$ is $d$-twin-ordered if it admits a $d$-wide merge sequence and it has twin-width at most~$d$ if there exist a reordering of its rows and columns such that the resulting matrix is $d$-twin-ordered.

Results for bounded twin-width matrices are usually stated for symmetric matrices, or rather for adjacency matrices of undirected graphs.
In this context, a symmetric variant of merge sequences is natural, which we define next.
A \emph{symmetric merge sequence} of a symmetric matrix $M \in \{0,1\}^n$ is a sequence of divisions $(\cR_n, \cC_n), \dots, (\cR_{1}, \cC_1)$ such that (i) $\cR_n = \cC_n$ are partitions of $[n]$ into singletons, (ii) $\cR_1 = \cC_1$ both consist of a single part $[n]$, (iii) we have $\cC_i = \cR_i$ for each $i \in [n]$, and (iv) $(\cR_{i+1}, \cC_{i+1})$ is obtained from $(\cR_{i}, \cC_{i})$ by merging the same adjacent intervals in both $\cR_i$ and~$\cC_i$.
Every symmetric matrix of twin-width~$d$ admits a uniform reordering of its rows and columns such that the resulting matrix admits an $f(d)$-wide symmetric merge sequence~\cite{bonnet2021twin}.

Crucially, every symmetric matrix of bounded twin-width also admits a reordering of its rows and columns and a symmetric merge sequence in which, additionally, the height of every row and width of every column is controlled.
Formally, a $d$-wide symmetric merge sequence $(\cR_n, \cC_n), \dots, (\cR_{1}, \cC_1)$ is \emph{balanced} if we have $|I| \le d \cdot \frac{n}{i}$ for every interval $I \in \cR_i = \cC_i$.
The existence of balanced symmetric merge sequences for bounded twin-width symmetric matrices is implicit in~\cite{BonnetGKTW22}, see the discussion in~\cite[Section 2.3]{BergeBDW23}.

\begin{lemma}[\cite{BonnetGKTW22}]\label{lem9}
For every symmetric binary matrix $M$ of twin-width~$d$, there exists a matrix~$M'$ obtained by a uniform reordering of its rows and columns such that $M'$ admits a $D$-wide balanced symmetric merge sequence where $D \in 2^{2^{\fO(d)}}$.
\end{lemma}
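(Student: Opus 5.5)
The plan is to re-run the standard ``mixed-minor-free implies bounded twin-width'' argument of~\cite{bonnet2021twin}, but to schedule the merges in parallel rounds so that part sizes stay balanced. First, from twin-width~$d$ I obtain a uniform reordering of $M$ into $M'$ that is $t$-mixed-free for some $t \in 2^{\fO(d)}$. (This ordering is extracted from a $d$-wide merge sequence, e.g.\ as in~\cite{bonnet2021twin}; I take the symmetric version, which costs only a constant-factor change in $t$.) The key consequence I will use is a Marcus--Tardos type counting bound. For every symmetric division $(\cR,\cC)$ of $M'$ with $\cR=\cC$ of size $k$, the number of \emph{mixed} cells is at most $c_t\cdot k$ with $c_t \in 2^{\fO(t)}$. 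This holds because a division with too many mixed cells contains a $t\times t$ mixed minor. Non-constant cells that are not mixed (e.g.\ cells consisting of a single row or column) are handled as in~\cite{bonnet2021twin}. Each part then has at most $\fO(1)$ such cells per mixed cell around it, so it suffices to control mixed cells.

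Next, I build the merge sequence in rounds. Suppose the current division has $k$ parts, each of size at most $A\cdot n/k$ for a constant $A$ fixed below. In a round I call a part \emph{light} if its size is at most $4n/k$; at least $3k/4$ parts are light, since the total size is $n$. I call a part \emph{good} if its row contains at most $8c_t$ mixed cells; by the counting bound, at least $7k/8$ parts are good. Hence at least $k/2$ parts are both light and good, and, by a simple pigeonhole argument on consecutive positions, at least $\Omega(k)$ disjoint adjacent pairs have both members light and good. I merge a fixed constant fraction of these disjoint pairs one after another, say $\varepsilon k$ of them, which brings the number of parts down from $k$ to $(1-\varepsilon)k$. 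Each merged part has size at most $8n/k$, and unmerged parts keep their old bound. So at any intermediate index $i \in [(1-\varepsilon)k,k]$, every part has size at most $\max(A, 8)\, n/k \le \max(A,8)\, n/i$. This keeps $A=\max(A,8)$ invariant across rounds, since $8n/k \le 8n/i$.

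The width bound during a round comes from the fact that every current part is the union of at most two parts from the start of the round. The mixed cells in a row of the current division are therefore contained in unions of mixed cells of at most two old rows against pairs of old columns. Merged rows were good, and unmerged ones can also be bounded by reapplying the counting bound to the current division. A row of the current division may still have large degree if it was not good, but such rows were never merged in this round, so their degree is the same as at a moment where I can reapply the counting bound. To avoid relying on that loosely, I would restrict the invariant to the stronger statement that \emph{every} row has degree $\fO(c_t)$. I would prove it by induction, using that merging two parts only creates mixed cells inside the union of their previous mixed cells plus a bounded number of new ones. Altogether this gives $D \in \fO(c_t) \subseteq 2^{2^{\fO(d)}}$.

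The main obstacle I expect is exactly this degree-control step. Because many merges are performed in parallel, I need a \emph{uniform} bound on every row and column of every intermediate division, not just an average. If the inductive argument above fails, the first remedy I would try is to merge, in each round, only pairs whose combined row has few mixed cells \emph{in the division obtained after the round}. The existence of $\Omega(k)$ such pairs then follows by applying the counting bound directly to the coarser division at the end of the round.
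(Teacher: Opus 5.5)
Your round-based balancing bookkeeping is fine: light parts, $\Omega(k)$ disjoint light-and-good pairs, and $|I|\le \max(A,8)\,n/i$. The gap is in width control. The paper itself gives no proof here; it imports the lemma from \cite{BonnetGKTW22} (see \cite[Section~2.3]{BergeBDW23}).

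\textbf{Width counts non-constant cells, not mixed cells.} A $D$-wide merge sequence bounds the number of \emph{non-constant} cells per row and column. Your claim that non-mixed non-constant cells are $\fO(1)$ per mixed cell is false. Take $M=\begin{psmallmatrix}0&A\\A^\intercal&0\end{psmallmatrix}$ of size $2m$, where every row of $A$ is $0101\cdots$. Every corner straddles the $m$-th row or column boundary, so $M$ is $2$-mixed-free, and it has bounded twin-width. In your first round all parts are singletons. A $1\times1$ cell is never mixed, so every part is light and good, and you may merge the disjoint pairs $\{m+2j-1,m+2j\}$. Afterwards every top-half row has $\Theta(n)$ non-constant cells $(0\;1)$ and no mixed cell at all.

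Choosing pairs more cleverly by mixed counts cannot fix this in general. No argument that uses only $t$-mixed-freeness of the fixed order plus Marcus--Tardos can produce a bounded-width merge sequence of \emph{intervals} of that order. To see this, symmetrize the paper's own example $N=\begin{psmallmatrix}Q^\intercal&Q\\Q&Q^\intercal\end{psmallmatrix}$ as $\begin{psmallmatrix}0&N\\N^\intercal&0\end{psmallmatrix}$. The result is a symmetric matrix with no $4$-mixed minor, yet the very first symmetric merge already creates $\Omega(n)$ error entries in one row. In \cite{bonnet2021twin}, the passage from bounded mixed value to bounded error value refines intervals into non-contiguous classes. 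A merge sequence over a fixed order forbids exactly that unless you re-order afterwards.

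\textbf{Even mixed cells are not controlled per row under parallel merges.} Marcus--Tardos bounds only the total number $c_t k$ of mixed cells, not the number in any single row. Merging $\varepsilon k$ disjoint column pairs can add one new mixed cell \emph{per pair} to a single row part. For example, the part formed by the rows $0101\cdots$ and $1010\cdots$ has no mixed cell against singleton columns, but has a mixed cell against every merged adjacent pair. Such a part can itself be merged as a good pair in the same round. Consequences:
\begin{itemize}
\item Your inductive claim that a merge creates only $\fO(1)$ new mixed cells outside the union of the old ones fails for the affected rows.
\item Goodness measured at the start of a round does not persist.
\item Your remedy is circular, since the division after the round depends on which pairs you choose. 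It also still bounds only the merged parts, and only their mixed cells.
\end{itemize}

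\textbf{What is missing.} You need the versatility result of \cite{BonnetGKTW22}. In a suitable order, read off from a contraction tree rather than from mixed-freeness alone, every stage has linearly many disjoint adjacent pairs whose \emph{simultaneous} merge keeps the number of non-constant cells in every row and column bounded. Establishing this requires controlling non-constant cells directly, not just mixed ones. Once that lemma is available, your light-pairs scheme does yield balancedness.
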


We now move to proving the main theorem. 
\thma*

\begin{proof}
We only focus on the row-Hamming-coherence since twin-width is clearly preserved under transpositions and $\cH^c(M) = \cH^r(M^\intercal)$.
Our goal is to utilize the existence of balanced merge sequences.
However icref{lem9} only applies to symmetric matrices.
This is resolved in a standard way by considering a symmetric $2n\times 2n$ matrix $M' = \begin{psmallmatrix} 0& M\\ M^\intercal &0\end{psmallmatrix}$ that has the same twin-width as~$M$.
Observe that $\fO(2n \cdot \log 2n)$ bound on the row-Hamming-coherence of $M'$ directly translates to an $\fO(n \log n)$ bound on the row-Hamming-coherence of~$M$ since $\cH^r(M) \le \cH^r(M')$.
Thus, we may assume from now on that $M$ is a symmetrix $n \times n$ binary matrix.

By \cref{lem9}, there is a matrix~$M'$ obtained from~$M$ by reordering its rows and columns that admits a a $D$-wide balanced symmetric merge sequence $(\cR_n, \cC_n), \dots, (\cR_{1}, \cC_1)$ with $D \in 2^{2^{\fO(d)}}$.
As we already argued, reordering rows and columns bears no effect on the row-Hamming-coherence and thus, we have $\cH^r(M') = \cH^r(M)$.

Let $\cQ$ be an arbitrary rectangle-decomposition of the matrix~$M'$.
Observe that the row-Hamming-sum of~$M'$ with respect to the canonical identity permutation is at most $\sum_{R \in \cQ}2w_R$ where $w_R$ denotes the width of a rectangle~$R$.
This holds since this canonical row-Hamming-sum is upper bounded by the length of all horizontal segments in the boundary of the orthogonal polygons of 1-entries in~$M'$ to which every rectangle in~$\cQ$ contributes at most twice its width.

The rest of the argument closely follows the proof of \cref{lem1} as given in~\cite{matrix_ds}.
Let $\cS_i$ be the set of all cells in~$(\cR_i, \cC_i)$, i.e., the set of all rectangles $I \times J$ for $I \in \cR_i$ and $j \in \cC_i$.
Set $\cS = \bigcup_{i=1}^n \cS_i$ and observe that $\cS$ is a laminar family, i.e., every two rectangles in~$\cS$ are either disjoint or one is contained in the other.

Let $\cQ$ be the subset of $\cS$ containing all 1-rectangles $R \in \cS$ that are inclusion-wise maximal within~$\cS$, i.e., there is no 1-rectangle $R' \in \cS$ such that $R \subsetneq R'$.
Observe that $\cQ$ covers all 1-entries of~$M'$ since every 1-entry is contained in some $1\times 1$ rectangle in~$\cS_n$ and moreover, all rectangles in~$\cQ$ are pairwise disjoint since $\cS$ is a laminar family.

For a 1-rectangle $R \in \cQ$, let $i_R$ denote the smallest~$i$ such that $R \in \cS_i$, i.e., the smallest~$i$ such that $R$ is a cell of the division $(\cR_i, \cC_i)$.
For each $i \in [n]$, let $\cQ_i$ be the subset of $\cQ$ containing all rectangles $R$ with $i_R = i$.
Naturally, $\cQ_1, \dots, \cQ_n$ forms a partition of~$\cQ$.
Moreover, observe that we have $w_R \le D \cdot \frac{n}{i}$ for any $R \in \cQ_i$ due to the balancedness of the merge sequence.

We claim that each $\cQ_i$ contains at most $2D$ rectangles.
This clearly holds for $i = 1$ since $\cS_1$ itself contains only a single rectangle.
Fix $i \le n-1$ and consider an arbitrary $R \in \cQ_i$.
We know that $R \notin \cS_{i-1}$ and thus, $R$ is contained in a non-constant rectangle $T \in \cS_{i-1}$ in the unique row or column that appears in $(\cR_{i-1}, \cC_{i-1})$ for the first time.
Moreover, the rectangle $T$ is obtained from merging rectangle $R$ with a rectangle $R'$ that itself cannot be composed of only 1-entries since $T = R \cup R'$ is not a 1-rectangle.
Since the merge sequence is $D$-wide, there are at most $2D$ such non-constant rectangles that appear in $\cS_{i-1}$ for the first time, at most $D$ per both the new row and column blocks.
It follows that $|\cQ_i| \le 2D$ for each $i \in [n]$.

We obtain the desired upper bound on $\cH^r(M')$ as
\[\sum_{R \in \cQ}2w_R = \sum_{i=1}^n\sum_{R \in \cQ_i}2w_R \le \sum_{i=1}^n 2D \cdot 2D \cdot \frac{n}{i} = 4D^2 \cdot  \sum_{i=1}^n \frac{n}{i} \in \fO(D^2 \cdot n \log n) = 2^{2^{\fO(d)}} \cdot n \log n.\qedhere\]
\end{proof}

\bigskip

We complement our result by a lower bound, showing that the $n\log{n}$ factor in Theorem~\ref{thm3} is unavoidable.

\thmb*
\begin{proof}
We construct a binary $n \times n$ matrix $M = M_i$ where $n$ is of the form $n = 2 \cdot 3^i$, for any non-negative integer $i$.  

We show that the twin-width of $M$ is at most $3$ and that both the row- and column-Hamming-coherence of $M$ is $\Omega(n\log{n})$. For values of $n$ not of the given form, we can construct $n \times n$ matrices with the same properties by simple padding with $0$s. 

The construction is recursive. We take $M_0 = \begin{psmallmatrix} 1& 0\\ 0 &1\end{psmallmatrix}$ for the base case.

For $i>0$, we let $M_i = \begin{psmallmatrix} M_{i-1}& 0 & 1\\ 1 &M_{i-1}&0\\ 0 & 1 & M_{i-1}\end{psmallmatrix}$,
where the $0$ and $1$ entries denote blocks of $0$-entries, resp., $1$-entries, of the same $(2 \cdot 3^{i-1} \times 2 \cdot 3^{i-1})$ sizes as the copies of $M_{i-1}$.

\paragraph{Twin-width.} We first show the bound on the twin-width; in fact, we more strongly show that $M_i$ is $3$-twin-ordered. We proceed by induction on $i$. The twin-width of $M_0$ is clearly two.

For $M_i$, when $i>0$, we exhibit a merge sequence (see definition in \S\,\ref{sec1}). In the first phase, we only allow merges between the first $2 \cdot 3^{i-1}$ rows and the first $2 \cdot 3^{i-1}$ columns. Notice that among the 9 blocks of $M_i$, only five are affected (the three on top and the three leftmost, with the top left $M_{i-1}$ block shared). Since $M_{i-1}$ is $3$-twin-ordered, by the induction hypothesis, we can perform these merges in such a way that at any time at most $3$ error symbols $\bot$ are present in any row or column of the top left block. Notice that in the $0$- and $1$- blocks we cannot get a symbol $\bot$, so this upper bound of $3$ holds globally.

After this phase, we end up with the matrix $\begin{psmallmatrix} \bot & 0 & 1\\ 1 &M_{i-1}&0\\ 0 & 1 & M_{i-1}\end{psmallmatrix}$, where the $0$ and $1$ entries in the top row, resp., leftmost column indicate a size-$(2\cdot3^{i-1})$ row-, resp., column-vector block of equal elements.

Analogously, in the second and third phases we contract the rows and columns corresponding to the middle $M_{i-1}$ block, resp., the lower right $M_{i-1}$ block. Throughout the process the number of $\bot$ symbols in any row or column is at most $3$, and we end up with the $3 \times 3$ matrix
$\begin{psmallmatrix} \bot& 0 & 1\\ 1 &\bot&0\\ 0 & 1 & \bot\end{psmallmatrix}$.
The merge sequence can then be finished in an arbitrary way since matrices of size at most $3 \times 3$ cannot contain more than three $\bot$ symbols in any row or column.
This certifies that $M_i$ is $3$-twin-ordered.

\paragraph{Hamming-coherence.} We next show the lower bound on the row-Hamming-coherence $\cH^r(M_i)$. The argument for the column-Hamming-coherence is entirely symmetric, we thus omit it. 

Consider the rows $r^{(1)}, \dots, r^{(n)}$ of $M = M_i$ and let $\pi$ be an ordering of them of minimal row-Hamming-sum, i.e., $\cH_\pi^r(M) = \cH^r(M)$.

Call a row \emph{type}-$(1)$, resp., type-$(2)$ or type-$(3)$, if it is among the first (resp., second or third) $2\cdot3^{i-1}$ rows of $M$.

We first observe that every row of $M$ has exactly the same number of $0$ and $1$ entries. It follows, by simple inspection of $M_i$, that the Hamming distance between two rows of different types is exactly $2n/3$. (They differ in exactly half of the entries that fall into a $M_{i-1}$-block in one of the two rows, and differ in all remaining entries). On the other hand, two rows of the same type have Hamming distance at most $n/3$. (They can only differ in the entries that fall into the $M_{i-1}$-block.)

From these observations we can conclude that in the optimal ordering $\pi$ all rows of the same type appear as contiguous subsequences. 

Suppose otherwise that there are two rows, say $r^{(\pi(i))}$ and $r^{(\pi(j))}$ that are of the same type, say type $(1)$, with $i<j$, but $r^{(\pi(i+1))}$ and $r^{(\pi(j-1))}$ are both of a type different than $(1)$ (these two rows may be the same).

Then we can modify $\pi$ by cutting out the contiguous subsequence $\pi(i+1), \dots, \pi(j-1)$ and placing it at the end of the sequence. Observe that removing the subsequence reduces the total Hamming-sum by $2 \cdot 2n/3$. Placing $\pi(i)$ next to $\pi(j)$ adds at most $n/3$ to the sum, and placing the sequence to the end adds at most $2n/3$. The total row-Hamming-sum has decreased and we reduced the number of neighboring pairs of different types. Repeating this procedure, we end up with rows of all three types forming contiguous subsequences in $\pi$. 

It remains to compute the total Hamming distance $\cH_\pi^r(M_i)$ for this ordering.
Denote this quantity by $\cH(i)$. We observe that within the contiguous same-type subsequences, only the $M_{i-1}$-block entries contribute, and we can analyse them recursively. At the two interfaces between types we have a Hamming distance of exactly $2n/3$ each. 
We thus have $$\cH(i) = 3 \cdot \cH(i-1) + 4n/3 = 3 \cdot \cH(i-1) + 8\cdot 3^{i-1},$$ with base case $\cH(0) = 2$. The recurrence yields $\cH^r(M) = \cH(i) \geq  3^{i-1} \cdot (8i + 6) \in \Omega(n\log{n})$.
\end{proof}

\section{$d$-mixed-free matrices}
\label{sec3}

In this section, we handle $d$-mixed-free matrices that are, in some sense, intermediate between $d$-twin-ordered matrices and general matrices of bounded twin-width.

We say that a division $(\cR, \cC)$ of a binary matrix~$M$ is a \emph{$k$-division} if both $\cR$ and $\cC$ consist of exactly $k$ intervals.
Intuitively, a $k$-division splits the matrix into $k$ row and column blocks.
We say that a matrix~$M$ is \emph{$d$-grid-free} if there is no $k$-division such that its all cells contain some 1-entry.
Notably, $d$-grid-free matrices play an important role in extremal combinatorics with the landmark Marcus-Tardos theorem~\cite{MarcusTardos} showing that any $d$-grid-free $n\times n$ matrix contains at most $f(d) \cdot n$ 1-entries. 

It follows that $d$-grid-free matrices are necessarily sparse and therefore, a different notion is needed to capture well-structured dense matrices.
This was a crucial observation at the birth of the twin-width theory.
We say that a matrix is \emph{mixed} if it is neither horizontal nor vertical, i.e., if neither all its rows nor all its columns are equal.
Equivalently, a matrix $M$ is mixed if it does not contain any \emph{corner} which is a contiguous $2 \times 2$ submatrix that is neither horizontal nor vertical (see \cite{bonnet2021twin}).
A matrix $M$ is then \emph{$d$-mixed-free} if it does not admit any $k$-division with all its cells mixed.
Observe that any $d$-grid-free matrix is trivially $d$-mixed-free but not the other way around, as witnessed by, e.g., the all-ones matrix.



Recall that every $d$-twin-ordered matrix is also $(2d+2)$-mixed-free~\cite{bonnet2021twin}, which motivates the use of \emph{mixed-freeness} as an ``intermediate'' complexity measure. Note, however, that bounded mixed-freeness does not imply bounded twin-order.
We illustrate this gap in complexity with a simple example\footnote{We thank Marek Soko\l{}owski for communicating this example.} that shows an $n \times n$ matrix $M$ that is not $d$-twin-ordered for any $d < n/2$ and $2$-mixed-free.

Assume $n$ is even and let $Q$ be an $n/2 \times n/2$ matrix whose odd-indexed rows are all $0$ and even-indexed rows are all $1$. Then, let $M = \begin{psmallmatrix}
    Q^\intercal &Q\\Q&Q^\intercal
\end{psmallmatrix}$.  The claims on the complexity of $M$ are easily verified. For concreteness, we show the construction for $n=8$:
$$M = \begin{psmallmatrix}
    0 & 1 & 0 & 1 & 0 &0 &0 &0\\
    0 & 1 & 0 & 1 & 1 &1 &1 &1\\
    0 & 1 & 0 & 1 & 0 &0 &0 &0\\
    0 & 1 & 0 & 1 & 1 &1 &1 &1\\
    0 & 0 & 0 & 0 & 0 &1 &0 &1\\
    1 & 1 & 1 & 1 & 0 &1 &0 &1\\
    0 & 0 & 0 & 0 & 0 &1 &0 &1\\
    1 & 1 & 1 & 1 & 0 &1 &0 &1
\end{psmallmatrix} .$$

We now prove the main result of this section.

\lemb*
\begin{proof}
It was shown in~\cite[Lemma 8]{matrix_ds} that every $d$-mixed-free matrix contains at most $2^{\fO(d)}\cdot n$ corners.
Following the treatment in \cref{sec2}, let $\cP$ denote the collection of disjoint orthogonal polygons covering all 1-entries in~$M$.
Recall that a single polygon~$P \in \cP$ always admits a rectangle-decomposition of size at most $C_P - H_P + 1$ where $C_P$ is the number of its concave vertices and $H_P$ is the number of its holes.
We ignore the number of holes completely and show that the sum $\sum_{P \in \cP}(C_P+1)$ over all polygons can be suitably charged to the $2^{\fO(d)}\cdot n$ corners present in~$M$.

Observe that any concave vertex corresponds uniquely to the mixed submatrix $\begin{psmallmatrix}
    1 &0\\1&1
\end{psmallmatrix}$ or its three possible rotations and thus, the number of concave vertices in all polygons is upper bounded by the total number of corners in~$M$.
It remains to bound the total number of polygons in~$\cP$.
For this purpose, associate to each polygon $P \in \cP$ its arbitrary leftmost convex corner.
Observe that this point is unique to each polygon and it either lies in the first (leftmost) column of~$M$, or it uniquely corresponds to either one of the mixed matrices $\begin{psmallmatrix}
    0 &1\\0&0
\end{psmallmatrix}$,
$\begin{psmallmatrix}
    1 &0\\0&1
\end{psmallmatrix}$ or
their reflections along the horizontal axis.
Therefore, the total number of polygons is also upper bounded by the number of corners contained in~$M$, plus a linear factor in~$n$ for the polygons intersecting the leftmost column of~$M$. 
We see that $M$ admits a rectangle-decomposition of size at most
\[\sum_{P \in \cP} (C_P - H_P + 1) \le \sum_{P \in \cP} C_P + \sum_{P \in \cP} 1 \le 2^{\fO(d)}\cdot n + 2^{\fO(d)}\cdot n + n \in 2^{\fO(d)} \cdot n.\qedhere\]
\end{proof}

We complement our structural characterization of $d$-mixed-free matrices by observing that the exponential dependence on~$d$ is unavoidable for rectangle-decompositions of $d$-mixed-free matrices.

\lemc*
\begin{proof}
Fox~\cite{jfox} showed that for every positive~$d$ and~$n$, there exists a $d$-grid-free matrix~$M \in \{0,1\}^{n \times n}$ with $2^{\Omega(d^{1/2})} \cdot n$ 1-entries.
We define a $2n \times 2n$ binary matrix $M'$ where $M'_{2i,2j} = M_{i,j}$ for each $i, j \in [n]$ and all other entries in~$M'$ are $0$.
Clearly, $M'$ is still $d$-grid-free and, thus, also $d$-mixed-free.
Moreover, $M'$ admits a unique rectangle-decomposition that consists of $1 \times 1$ rectangles covering all its 1-entries since no pair of 1-entries is adjacent.
Therefore, $M'$ is a $d$-mixed-free $2n \times 2n$ binary matrix that requires rectangle-decomposition of size at least $2^{\Omega(d^{1/2})} \cdot n$.
\end{proof}

\section{Conclusions}
\label{sec5}

We gave efficient algorithms for matrix-vector multiplication, and consequently, matrix-matrix multiplication for matrices of bounded twin-width. Our approach is significantly simpler than earlier ones and in contrast to those, it requires only one of the matrices to have bounded twin-width. 

In case of $d$-twin-ordered and $d$-mixed-free matrices, our runtime bounds do not contain spurious logarithmic factors, and have a mild dependence on the parameter $d$ -- linear, resp., single-exponential, improving the runtimes of earlier results, while being exceedingly simple. 

In case of matrices of bounded twin-width where the ordering of rows and columns is arbitrary, our result improves standard matrix multiplication bounds, and to our knowledge it is the first algorithmic result that takes advantage of bounded twin-width without explicitly using a contraction sequence that certifies it. While we stated, for simplicity, all our results for square matrices, they also apply, \emph{mutatis mutandis}, to rectangular matrices.

Our matrix-vector multiplication results (almost linear time) can be contrasted with those for bounded VC-dimension matrices (almost quadratic time). The lack of meaningful intermediate complexity-measures between twin-width and VC dimension has already been noted in the literature~\cite{thomasse2022brief}, this question now gains additional motivation in the context of matrix multiplication.

\small
\printbibliography
\end{document}